\newcolumntype{Y}{>{\centering\arraybackslash}X} 
\algnewcommand\algorithmicinput{\textbf{Input:}}
\algnewcommand\INPUT{\item[\algorithmicinput]}
\algnewcommand\algorithmicoutput{\textbf{Output:}}
\algnewcommand\OUTPUT{\item[\algorithmicoutput]}
\newcommand\myshade{70} 
\theoremstyle{definition}
\newtheorem{proposition}{Proposition}
\newtheorem{theorem}{Theorem}
\newtheorem{remark}{Remark}
\newtheorem{example}{Example}
\newcommand{\bX}{\pmb{X}}
\newcommand{\bY}{\pmb{Y}}
\newcommand{\bZ}{\pmb{Z}}
\newcommand{\bzero}{\pmb{0}}
\newcommand{\vX}{\Vec{\bX}^{(j)}}
\newcommand{\vZ}{\Vec{\bZ}^{(j)}}
\newcommand{\rs}{{\rm{RS}}}
\newcommand{\rank}{{\rm rank}}
\newcommand{\lcm}{{\rm lcm}}
\DeclareFixedFootnote{\mult}{To simplify the notation, we will write $\cdot$ for both $\cdot_\mathbb{U}$ and $\cdot_\mathbb{V}$.}
\newcommand{\hm}[1]{{\color{magenta}(HM: #1)}}
\newcommand{\FF}{\mathbb{F}}
\begin{document}

\title{Verifiable Coded Computation of Multiple Functions}

\author{
\IEEEauthorblockN{
Wilton Kim\IEEEauthorrefmark{1},
\and Stanislav Kruglik\IEEEauthorrefmark{1},
\and Han Mao Kiah\IEEEauthorrefmark{1}
}
\thanks{\IEEEauthorrefmark{1}Authors are with the School of Physical and Mathematical Sciences, Nanyang Technological University, Singapore (email: \{wilt0002, Stanislav.kruglik, hmkiah\}@ntu.edu.sg).}
\thanks{This paper was presented in part at the 2023 IEEE Information Theory Workshop~\cite{Kim2023}. Corresponding author: \textit{Wilton Kim}}
}

\maketitle

\begin{abstract}
We consider the problem of evaluating distinct multivariate polynomials over several massive datasets in a distributed computing system with a single master node and multiple worker nodes. We focus on the general case when each multivariate polynomial is evaluated over its corresponding dataset and propose a generalization of the Lagrange Coded Computing framework (Yu et al. 2019) to perform all computations simultaneously while providing robustness against stragglers who do not
respond in time, adversarial workers who respond with wrong computation and information-theoretic security of dataset against colluding workers. Our scheme introduces a small computation overhead which results in a reduction in download cost and also offers comparable resistance to stragglers over existing solutions.
On top of it, we also propose two verification schemes to detect the presence of adversaries, which leads to incorrect results, without involving additional nodes.
\end{abstract}

\begin{IEEEkeywords}
distributed computing, 
communication efficiency,
verifiability, 
privacy, 
\end{IEEEkeywords}

\section{Introduction}\label{sec::intro}
Due to the enormous size of current datasets, computational operations must be carried out in a distributed manner by outsourcing the workload to external servers~\cite{abadi}. Some of these servers can be \textit{stragglers} (slow-responding servers) (see \cite{strag1, strag2}), \textit{adversarial} (those which respond with wrong computations), or \textit{colluding} (those which communicate with other servers to obtain some information on the datasets) (see \cite{pri1, pri2, pri4, pri5, pri8}). 
Also, we have communication restrictions that limit the scalability of such systems.

Coded distributed computation is an emerging research area that outsources computation to worker nodes in encoded form 
so that computation results are correct despite adversarial behavior.
Such behavior may include providing wrong results, colluding, straggling, or their combination. Polynomial codes were proposed in \cite{strag3} to compute high-dimensional distributed matrix multiplication which tolerates stragglers. In \cite{poly+bgw}, the authors proposed a scheme that combines polynomial codes with Ben-Or, Goldwasser, and Wigderson (BGW) scheme \cite{bgw} to keep the datasets private. 
It was further improved in \cite{lcc}, where the authors proposed \textit{Lagrange Coded Computing} (LCC), which has resiliency against stragglers and adversaries and provides security against colluding workers.
In \cite{mapreduce}, the authors considered the general distributed computing framework, where each server performs as both Master Node and Worker Node, and wants to obtain the computation of several functions $\psi_1,\ldots,\psi_\ell$ on the given dataset $X$. 
However, the distributed computing setup, in this paper, comprises one Master Node that wants to obtain the result of computation on its own data and several Worker Nodes that assist the Master Node (as in \cite{lcc,glcc}). 
We also assume that each Worker Node performs its computational task individually without communication with other Worker Nodes. For a detailed survey of distributed computing, readers can refer to \cite{survey1,survey2}. For the rest of the paper, we call the one who performs the main computational task(s) as the \textit{Master Node} and the other servers as \textit{Worker Nodes}.

Now, to tolerate adversarial Worker Nodes, a typical approach is to collect responses from {\em additional} workers.
This entails an assumption for the maximum allowable adversarial workers, leaving room for possible inaccuracies if this threshold is exceeded. 
Hence, in the second part of the paper, in addition to tolerating stragglers, colluding, and adversarial nodes, we also introduce techniques to verify the correctness of computation results. 
We reiterate that we do not collect results from additional workers.
Instead, we increase the computation load of each worker, albeit marginally, to certify correctness.

Now, verification schemes have a rich history and certain classic verification schemes employ interactive techniques \cite{Babai85,Goldwasser85}.
However, these approaches necessitate multiple rounds of interaction, leading to significant communication overhead. 
To mitigate this issue, the concept of non-interactive verification was introduced in \cite{Goldwasser15}, allowing verification to be performed in one round. 
This breakthrough triggered a flurry of results in verifiable computations \cite{Bitansky12,Catalano15,Gennaro2009,Wang22}. 
The verification methods proposed in this paper fall under the category of non-interactive verification.

Our work is closely related to LCC \cite{lcc}, which evaluates a \textit{single} multivariate polynomial $\psi$ on some datasets in a distributed manner. In \cite{glcc}, the authors proposed a generalization of LCC, entitled {\em Generalized Lagrange Coded Computing} (GLCC), which splits the datasets into several parts so as to define the computational subtasks for the Worker Nodes. 
This gives rise to trade-offs between communication and computation costs and the required number of workers. The Master Node then has the flexibility to decide on how to split the datasets to optimize the performance. 
In both \cite{lcc} and \cite{glcc}, the system comprises one Master Node and many Worker Nodes. The dataset is $\bX = (X_1,\ldots,X_M)$ and the Master Node wants to obtain $\psi(X_1),\ldots,\psi(X_M)$, where $\psi$ is a single polynomial function. For instance, given matrices $X_1,X_2,X_3$, the Master Node wants to obtain $X_1^2,X_2^2,X_3^2$ and here, $\psi(u)=u^2$.

In this paper, {the Master Node wants to evaluate different functions on different elements from the same dataset $\bX$. For example, in the above-mentioned setup} the Master Node wants to obtain $X_1^{10},X_2^{10},X_3^2$, which involve two distinct polynomials $\psi_1(u) = u^{10}$ and $\psi_2(u) = u^2$.  {We discuss briefly some} ways to solve this problem {by modifying existing approaches}:
\begin{itemize}[leftmargin=*]
    \item \textit{Scheme 1}: The Master Node constructs a new polynomial $\psi$ so that it can apply LCC in a single round. For instance, to obtain $X_1^{10},X_2^{10},X_3^2$, the Master Node constructs $\psi(u,v) = u^{10} + v^2$ and views the computations as $\psi(\tilde\bX_1),\psi(\tilde\bX_2),\psi(\tilde\bX_3)$, where $\tilde\bX_1 = (X_1,0),\tilde\bX_2 = (X_2,0), \tilde\bX_3 = (0,X_3)$ and $0$ is the zero matrix of the same dimension as $X_i$.
    \item \textit{Scheme 2}: The Master Node splits the Worker Nodes into $L$ groups, $G_1,\ldots,G_L$ and applies LCC  in each group such that from the group $G_i$, the Master Node obtains all computations from $\psi_i$. For instance, to obtain $X_1^{10},X_2^{10},X_3^2$, the Master Node splits the Worker Nodes into two groups $G_1$ and $G_2$ and applies LCC separately on $G_1$ and $G_2$ such that from $G_1$, the Master Node obtains $X_1^{10},X_2^{10}$ and from $G_2$, the Master Node obtains $X_3^2$. 
    \item \textit{Scheme 3}: The Master Node can apply $L$ rounds of LCC, such that in the $i$-th round, it performs all computations on datasets related with $\psi_i$. For instance, to obtain $X_1^{10},X_2^{10},X_3^2$, the Master Node applies LCC in two rounds.
\end{itemize}

In this paper, we propose a new scheme (defined as Scheme~4) that computes all computations in one round, by modifying the task given to the Worker Nodes. The scheme requires the Worker Nodes to perform slightly more computation and has a slightly worse tolerance to stragglers, but the download cost is significantly lower. 
We elaborate on this in Section~\ref{sec::coded_computation}. 
The key steps in our scheme comprise: 
partitioning the computations into different groups, and 
introducing a polynomial $h$ of degree $K-1$ for all servers to compute.
The polynomial $h$ is designed such that
its evaluations at pre-selected points provide all required computations. 
Hence, to recover all required computations without adversarial nodes, 
the Master Node waits for the first $K$ responses and performs the recovery of $h$.
We employ a similar technique as in \cite{glcc} to construct the function $h$.

Suppose that the Master Node assumes that there are at most $A$ adversaries in the system. 
To tolerate $A$ wrong responses from Worker Nodes, the Master Node needs to wait for an additional $2A$ responses, requiring a total of $K+2A$ responses. 
However, there remains a possibility that the computational results might still be incorrect if the number of adversaries exceeds the designed limit. 
To address this concern, we propose verification schemes that enable us to check (with high probability) for the presence of wrong computational results when we have $K+2A$ responses, among which up to $T$ can be adversarial.

Suppose that the computation results are verified to be wrong; in that case, the Master Node may download responses from two other nodes (totaling $K+2(A+1)$ responses) to obtain and verify the required computations. If incorrect results persist, the Master Node can continue this process as long as the assumed number of adversaries is at most $T$. It is important to note that our verification schemes only detect the presence of wrong computations but do not identify which specific nodes are adversaries.

The rest of the paper is organized as follows. 
In Section~\ref{sec::preliminary}, we formulate the problem and provide an overview of our contributions. Sections~\ref{sec::coded_computation} and~\ref{sec::verification} detail our coded computation schemes and verification schemes, respectively. In Section~\ref{sec::numerical}, we evaluate the performance of the schemes numerically.


\section{Preliminaries}\label{sec::preliminary}
For any positive integer $n$, we denote the set $\{1, 2, \ldots, n\}$ as $[n]$. The finite field of large enough size is denoted as $\FF$. We use $\rs(n, k)$ to represent a Reed-Solomon code of length $n$ and dimension $k$ over the finite field $\FF$. The entropy of a discrete random variable $X$ is denoted as $H(X)$ and the mutual information between two discrete random variables $X$ and $Y$ is denoted as $I(X;Y)$. Additionally, for any matrix $M$, we denote $M^T$ as its transpose.

\subsection{Problem Formulation}\label{sec::problem}

We represent the information at the Master Node (MN) as an $M$-tuple 
$\bX = (X_1,\ldots,X_M)$ with $X_i\in \mathbb{U}$, where $\mathbb{U}$ is a vector space over $\FF$.
For instance, if the elements in the data are square matrices, then $\mathbb{U} = \FF^{n\times n}$. 
Suppose that the Master Node wants to obtain the following $\ell = \sum_{i=1}^L \ell_i$ many computations, 
\begin{align}\label{comp}\Big\{\psi_1\left(\bX_1^{(1)}\right),\ldots,&\psi_1\left(\bX_{\ell_1}^{(1)}\right),\ldots,\notag\\
&\psi_L\left(\bX_1^{(L)}\right),\ldots,\psi_L\left(\bX_{\ell_L}^{(L)}\right)\Big\},\end{align}
where $\bX_i^{(j)}$ contains $s_j$ components from $\bX$ for all $i\in[\ell_j],~j\in[L]$ and $\psi_j:\mathbb{U}^{s_j}\rightarrow{\mathbb{V}}$ is a multivariate polynomial of total degree $D_j$ with
$\mathbb{V}$ being a vector space over $\FF$. 
In other words, the Master Node is interested in $\ell$ computations involving $L$ polynomials, $\psi_1,\ldots,\psi_L$, where each polynomial $\psi_j$ takes $s_j$ elements of $\mathbb{U}$ as input, and produces an element of $\mathbb{V}$ as output. 
For instance, for the case of $\mathbb{U} = \mathbb{V} = \FF^{n\times n}$ and $M=4$, let the dataset be $\bX = (X_1,X_2,X_3,X_4)$. 
Suppose that the Master Node wants to compute $\psi_1(X_1) = X_1^{10}$, $\psi_1(X_2) = X_2^{10}$, $\psi_2(X_3) = X_3^7$ and $\psi_3(X_4) = X_4^2$. In this case, $s_1=s_2=s_3=1$, $L=3$, $\ell_1=2$, $\ell_2=1$, $\ell_3=1$, and $\ell = 4$. 
To obtain required computation results, the Master Node outsources the workload to $d\ge \ell$ Worker Nodes by sending encoded data $\bY_i$ to the $i$-th Worker Node. After receiving $\bY_i$, the $i$-th Worker Node performs some computations on $\bY_i$ and sends its results back to the Master Node. 
The computations are allocated in such a way that the $i$-th Worker Node's response is the evaluation of a polynomial $h$ at point $\alpha_i$. 
We consider the case when there are $S$ stragglers, $A$ adversaries, and $T$ colluding
 nodes (see Figure~\ref{fig:illus}).
\begin{figure}
    \captionsetup{font=small}
    \centering
    \includegraphics[width = 0.4\textwidth]{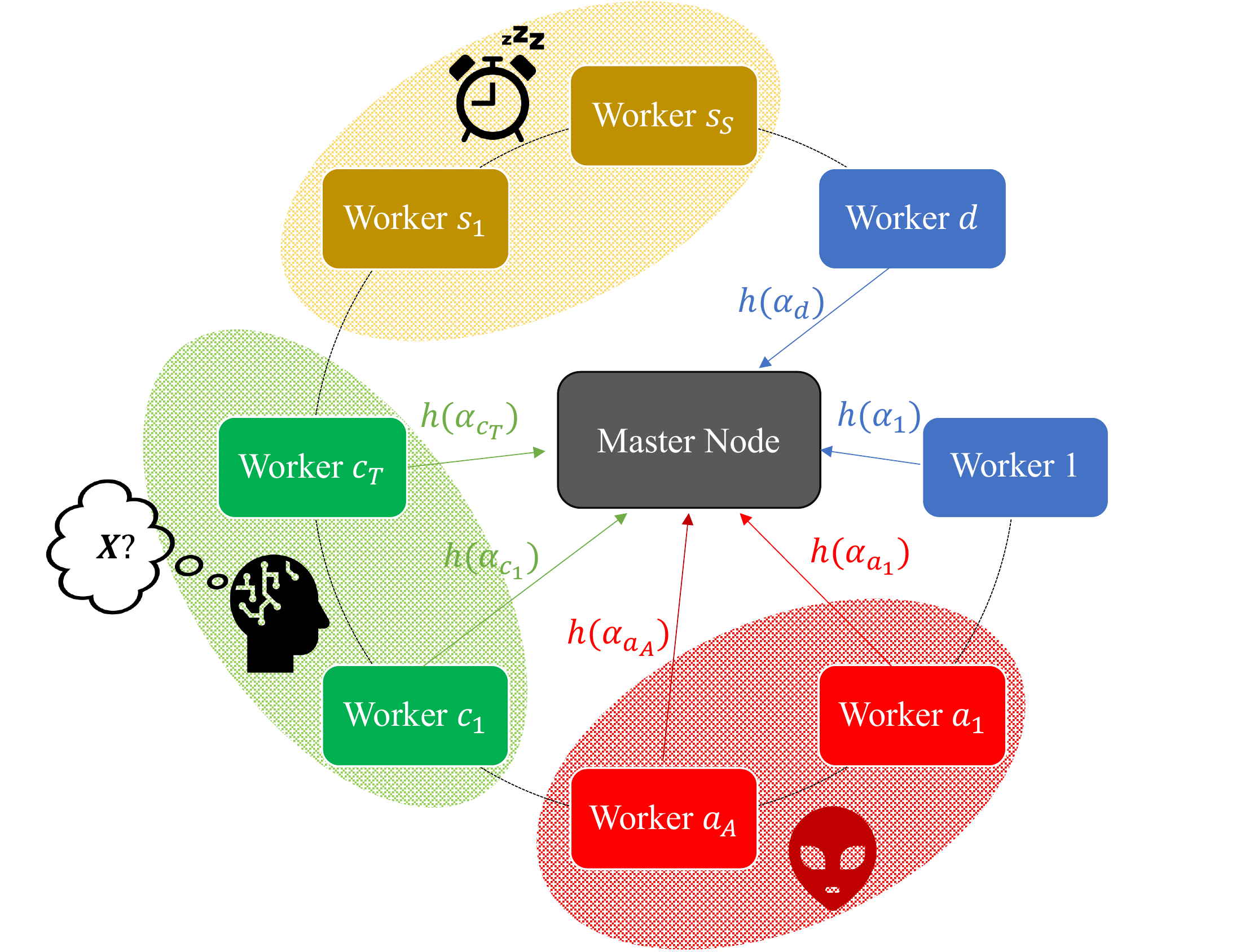}
    \caption{A distributed computing system with one Master Node and $d$ Worker Nodes. To recover computation results, the Master Node waits for the first $K$ Worker Nodes to respond, which might include $A$ adversary responses, and up to $T$ workers may collude to obtain some information on $\pmb{X}$.}
    \label{fig:illus}
\end{figure}
Therefore, we aim to:
\begin{enumerate}[label=(\alph*), leftmargin=*]
    \item Propose a distributed scheme with a small communication cost that satisfies the following constraints:
\begin{itemize}[leftmargin=*]
    \item \textit{$T$-Secure}: Any $T$  colluding nodes are not able to obtain any extra information about the dataset.
    \item \textit{Correctness}: For some $K\le d$, the scheme can correctly recover the required computation from the fastest $K$ responses, even with the existence of $A$ adversarial responses among them. We assume that the identities of adversaries are not known by the Master Node.
\end{itemize}
\item Propose verification techniques to detect the presence of incorrect computations without using additional responses.
\end{enumerate}

\noindent We\,evaluate\,all\,schemes\,with\,the\,following\,performance\,metrics.

\begin{enumerate}[label=(\roman*), leftmargin=*]
    \item Straggler Resistance (SR): The number of stragglers the scheme can tolerate.
    \item Upload Cost (UC): The number of elements in $\mathbb{U}$ the Master Node needs to send.
    \item Download Cost (DC): The number of elements in $\mathbb{V}$ the Master Node needs to download.
    \item Computation in Master Node (MN): The number of $\cdot$ (scalar multiplication) and $\times$ (multiplication of a field element in $\FF$ with an element in $\mathbb{U}$) that the Master Node needs to perform to distribute the computation. 
    We assume that all necessary multiplications of field element $\FF$ are pre-computed.
    \item Computation in Worker Node (WN): The number of multiplications $\cdot$ (scalar multiplication) and $\times$ (multiplication of a field element in $\FF$ with an element in $\mathbb{U}$) that the Worker Node needs to perform to complete the task given by the Master Node. We assume that all necessary multiplications of field elements $\FF$ are pre-computed.
\end{enumerate}

\subsection{Our Contributions}\label{sec::main_result}
In Section~\ref{sec::coded_computation}, we present trivial extensions of 
existing schemes (Schemes~1,~2, and~3) and introduce a new Scheme~4. 
Our proposed Scheme~4 achieves a lower Download Cost compared to other schemes by slightly increasing the workload on Worker Nodes. 
In Section~\ref{sec::verification}, we propose two verification schemes that built upon Scheme 4 to detect the presence of incorrect results among the required computations. 
As before, these verification schemes increase the computation workload of Worker Nodes, 
but do not require additional responses to perform the recovery.  
We provide numerical comparisons of the proposed schemes in Section~\ref{sec::numerical}. 

\section{Coded Computation Schemes}\label{sec::coded_computation}

In this section, we discuss some naive schemes, followed by our proposed scheme to solve the problem mentioned in Section~\ref{sec::problem}, without results verification. First, we consider Scheme 1, where the Master Node rewrites the $\ell$ computations so that they become computations of a single function. Then, we explore another naive scheme, where the Master Node splits the Worker Nodes into $L$ many groups, each performing different computations (Scheme 2). Next, we consider the scheme where the Master Node performs LCC in $L$ rounds (Scheme 3). Afterward, we describe the proposed scheme, where the Master Node performs all computations in one round (Scheme 4).

We note that Schemes~1,~2, and~3 are minor modifications of existing schemes from \cite{lcc}. For the convenience of the reader, we explicitly state their performance in Theorems~\ref{thm:scheme0},~\ref{thm:schemeI}, and \ref{thm:scheme2}.

\begin{remark}
To highlight the main ideas, we only consider schemes based on the LCC framework \cite{lcc}. Nevertheless, it is also possible to apply GLCC \cite{glcc} on top of all our schemes to achieve a more flexible performance in terms of the trade-off between computation and communication costs.
\end{remark}

\subsection{Scheme~1 (First Naive Approach)}
The Master Node forms the multivariate polynomial,

\begin{align}
\Psi(\pmb{u}_1,\pmb{u}_2,\ldots,\pmb{u}_L) = \sum_{i\in [L]} \psi_i(\pmb{u}_i),
\end{align}

where $\pmb{u}_j$ consists of $s_j$ many elements of $\mathbb{U}$. For instance, the computation $\psi_1\left(\bX_1^{(1)}\right)$ can be expressed as $\Psi\left(\bX_1^{(1)},\bzero,\ldots,\bzero\right)$. Hence, the Master Node can reformulate the problem into obtaining the computations
\begin{align}\label{eq:scheme0}
    &\Bigg\{\Psi\left(\tilde\bX_1^{(1)}\right),\ldots,\Psi\left(\tilde\bX_{\ell_1}^{(1)}\right),\ldots, \notag \\
    &\hspace{26mm} \Psi\left(\tilde\bX_1^{(L)}\right),\ldots, 
    \Psi\left(\tilde\bX_{\ell_L}^{(L)}\right)\Bigg\},
\end{align}
where $\tilde\bX_i^{(j)}$ is formed so that its $j$-th component is $\bX_i^{(j)}$ and other components are zero. 
Clearly, $\Psi$ is a polynomial of degree $\max_j\{D_j\}$. Therefore, the Master Node can apply LCC to obtain \eqref{eq:scheme0}. 
Specifically, the Master Node constructs a sharing polynomial $f$, with distinct evaluation points from $\pmb{\beta} \triangleq \bigcup_{j\in[L]}\pmb{\beta}_j\cup\left\{\beta^{(R)}_i:i\in[T]\right\}$, where $\pmb{\beta}_j \triangleq\left\{\beta_i^{(D,j)}\in\FF:i\in[\ell_j]\right\}$, such that
\begin{align}
    \begin{cases}
        f\left(\beta_i^{(D,j)}\right) = \tilde\bX_i^{(j)}&\text{for all }j\in[L],i\in[\ell_j],\\
        f\left(\beta_i^{(R)}\right) = \tilde\bZ_i&\text{for all }i\in[T].
    \end{cases}
\end{align}
Here, the elements in each $\tilde\bZ_i$ are independently and uniformly chosen at random. 
Note that the polynomial $f$ interpolates $\left(\sum_{j\in[L]} \ell_j\right)+T = \ell + T$ points, 
and hence, $f$ is a polynomial of degree $\ell + T-1$. The Master Node assigns a unique evaluation point to each Worker Node from $\pmb{\alpha} = \{\alpha_i\in\FF:i\in [d]\}$ (here, $\pmb{\alpha}\cap\pmb{\beta}=\emptyset$) and sends $f(\alpha_i)$ to the $i$-th Worker Node. The $i$-th Worker Node proceeds to compute \begin{align}\label{eq::workernode_h_scheme1}
   h(\alpha_i) = \Psi(f(\alpha_i)),
\end{align} and sends it back to the Master Node. The Master Node expects to obtain
\begin{align}\label{resp::scheme1}
    (h(\alpha_1),\ldots,h(\alpha_d)),
\end{align}
and this is a codeword of an $\rs(d,K)$ code with $K = \max_j\{D_j\}(\ell + T - 1) + 1$.  
Note that recovering $h$ gives us all required computations.
The values of performance metrics are formulated in the theorem below.
\begin{theorem}[Scheme 1]\label{thm:scheme0}
Fix $A$ and $T$, and set $K_1\triangleq \max_j\{D_j\}(\ell + T - 1) + 1$.
Further, choose $d\ge K_1 +2A$.
 
Suppose that there are $d$ Worker Nodes, of which at most $A$ are adversarial and at most $T$ are colluding.
Then Scheme~1 is $T$-secure, correct, and achieves the following metrics.
\begin{itemize}[leftmargin=*]
    \item Straggler Resistance: $d-K_1-2A$.
    \item Upload Cost: $d\sum_{j\in[L]} s_j$.
    \item Download Cost: $K_1+2A
    $.
    \item Computation\,in\,MN: $(\ell+T)\sum_{j\in[L]}s_j$ multiplications of $\cdot$'s.
    \item Computation\,in\,WN: no multiplications of $\cdot$'s and the number of multiplications $\times$'s in $\Psi$ to compute.
\end{itemize}
\end{theorem}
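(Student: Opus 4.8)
The plan is to build the whole argument around the single response polynomial $h(x)=\Psi(f(x))$, establish the three promised guarantees ($T$-security and correctness, plus the implied straggler bound), and then read the remaining metrics straight off the construction.

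\emph{Correctness and straggler resistance.} First I would pin down $\deg h$ exactly. Each coordinate of $f$ has degree at most $\ell+T-1$ and $\Psi$ has total degree $\max_j\{D_j\}$, so the composition satisfies $\deg h\le \max_j\{D_j\}(\ell+T-1)=K_1-1$. Hence $(h(\alpha_1),\dots,h(\alpha_d))$ is a codeword of $\rs(d,K_1)$ over the distinct points $\alpha_i$. Collecting the fastest $K_1+2A$ responses and treating the $d-K_1-2A$ unanswered positions as punctured, the received vector is a noisy codeword of $\rs(K_1+2A,K_1)$, an MDS code of minimum distance $2A+1$, which corrects the at most $A$ adversarial errors; standard Reed--Solomon decoding then recovers $h$ exactly. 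Evaluating the recovered $h$ at the nodes of $\pmb\beta$ gives $h(\beta_i^{(D,j)})=\Psi(f(\beta_i^{(D,j)}))=\Psi(\tilde\bX_i^{(j)})=\psi_j(\bX_i^{(j)})$ for all $i,j$, which is precisely the target list \eqref{comp}. Since $d\ge K_1+2A$ guarantees the needed responses, this simultaneously yields correctness and straggler resistance $d-K_1-2A$.

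\emph{Security.} For $T$-security I would fix an arbitrary colluding set $\mathcal{T}\subseteq[d]$ with $|\mathcal{T}|=T$ and show $I(\bX;\{f(\alpha_i)\}_{i\in\mathcal{T}})=0$. Working coordinatewise (the entries of the masks $\tilde\bZ_i$ are i.i.d.\ uniform over $\FF$), Lagrange interpolation writes each $f(\alpha_i)$ as a fixed $\FF$-affine combination of the $\ell$ data values and the $T$ masks. Holding the data fixed, the map from the $T$ masks to the $T$ observed shares is a $T\times T$ matrix whose entries are the random-node Lagrange weights evaluated at the $\alpha_i$; because all points of $\pmb\alpha$ and $\pmb\beta$ are distinct, this matrix is nonsingular by the usual Vandermonde/MDS argument. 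Consequently, conditioned on any value of $\bX$, the $T$ shares are a bijective image of the uniform masks and are themselves uniform, so their distribution is independent of $\bX$, giving $I(\bX;\{f(\alpha_i)\}_{i\in\mathcal{T}})=0$.

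\emph{Metrics and the main obstacle.} The remaining four metrics are bookkeeping: the Master Node uploads $f(\alpha_i)\in\mathbb{U}^{\sum_j s_j}$ to each of $d$ workers (upload cost $d\sum_j s_j$); it downloads the $K_1+2A$ fastest responses, each an element of $\mathbb{V}$ (download cost $K_1+2A$); forming a share is an $\FF$-linear combination of the $\ell+T$ interpolation values, giving $(\ell+T)\sum_j s_j$ scalar multiplications; and each Worker Node evaluates $\Psi$ at a single point, incurring exactly the multiplications present in $\Psi$ and none of the (scalar) encoding multiplications. I expect the only genuinely delicate step to be the security claim --- specifically, justifying nonsingularity of the mask-to-share matrix and upgrading ``the adversary cannot recover $\bX$'' to the information-theoretic statement that the shares are \emph{distributed independently} of $\bX$; correctness reduces cleanly to the degree bound on $h$ and the decoding radius of a punctured Reed--Solomon code, and the metrics are immediate.
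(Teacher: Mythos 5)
Your proof is correct and follows essentially the same route as the paper: the degree bound $\deg h \le \max_j\{D_j\}(\ell+T-1) = K_1-1$ makes the responses a codeword of an $\rs(d,K_1)$ code, decodable from $K_1+2A$ responses despite $A$ errors, and the four cost metrics are read off the construction exactly as in the paper's proof. The only presentational difference is that the paper defers the $T$-security argument to its proof of Theorem~\ref{theorem:1} (phrased there as a mutual-information/rank computation), whereas you prove it inline via conditional uniformity of the shares; both arguments rest on the same underlying fact, namely the invertibility of the $T\times T$ block of Lagrange coefficients attached to the random masks.
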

\begin{proof}
    Since $K-1$ corresponds to the degree of the polynomial $h$ defined in~\eqref{eq::workernode_h_scheme1} and
    Worker Node $i$ computes $h(\alpha_i)$, the $d$-tuple of computations \eqref{resp::scheme1} can be viewed as a codeword of an $\rs(d,K_1)$ code. Hence, with $K_1+2A$ responses, the Master Node can correctly recover the polynomial $h$, even in the presence of $A$ adversarial responses (see for example~\cite[Ch.~6]{assympdecode}).
    
    \noindent \textit{Stragglers Resistance.} Since we require $K_1+2A$ responses, we can tolerate $d-K_1-2A$ stragglers.
    
    \noindent \textit{Upload and Download Cost.} The Master Node sends an encoded data which consists of $\sum_{j\in[L]} s_j$ elements of $\mathbb{U}$ to all Worker Nodes. So, the Upload Cost is $d\sum_{j\in[L]}s_j$ elements of $\mathbb{U}$. To do the recovery, the Master Node downloads $K_1+2A$ elements of $\mathbb{V}$.

    \noindent \textit{Computation in MN.} The Master Node computes the values of a sharing polynomial at some evaluation point, which interpolates $\ell$ points of dimension $\sum_{j\in[L]}s_j$ and $T$ random points of the same dimension. Assuming that all multiplications of field elements are pre-computed, the Master Node needs to perform $(\ell+T)\sum_{j\in[L]} s_j$ scalar multiplications for each node.

    \noindent \textit{Computation in WN.} The Worker Nodes apply $\Psi$ to their received shared information. So, the number of $\times$ performed by the Worker Nodes is equal to the number of $\times $ in $\Psi$.

    \noindent \textit{Security.} 
    Since the security proofs of all Schemes~1--4 are similar, we only provide a detailed proof for Theorem~\ref{theorem:1} (see Section~\ref{sec:scheme3}).
\end{proof}


As mentioned before, there are two other naive techniques. 
One is to split the Worker Nodes into $L$ different groups, $G_1,\ldots,G_L$, such that the computations for Worker Nodes in $G_j$ only involve $\psi_j$. Then for each group, we apply LCC separately. 
Another one is to apply LCC in $L$ rounds, so that in round $j$, we obtain computations involving $\psi_j$. 
Henceforth, for these schemes, the Master Node constructs $L$ \textit{sharing} polynomials, by considering a set of distinct evaluation points $\pmb{\beta} \triangleq \bigcup_{j\in[L]}\pmb{\beta}_j\cup\left\{\beta^{(R)}_i:i\in[T]\right\}$ where $\pmb{\beta}_j \triangleq\left\{\beta_i^{(D,j)}\in\FF:i\in[\ell_j]\right\}$. The points are chosen such that, for all $j\in [L]$, we have $f_j: \FF\rightarrow \mathbb{U}^{s_j}$ and 
\begin{alignat}{4}\label{eq:sharingpoly}
&f_j\left(\beta_1^{(D,j)}\right) &&= \bX_1^{(j)}&&,\ldots, f_j\left(\beta_{\ell_j}^{(D,j)}\right) &&= \bX_{\ell_j}^{(j)},\notag\\
&f_j\left(\beta_1^{(R)}\right) &&= \bZ_1^{(j)}&&,\ldots,f_j\left(\beta_T^{(R)}\right) &&= \bZ_T^{(j)}.
\end{alignat}
Here, the elements in each $\bZ_i^{(j)}$ are independently and uniformly chosen at random.
It is clear that $f_j$ is a polynomial of degree $\ell_j+T-1$. 
The Master Node assigns each Worker Node with a unique evaluation point from set $\pmb{\alpha} = \{\alpha_i\in\FF:i\in[d]\}$ such that $\pmb{\alpha}\cap \pmb{\beta} = \emptyset$. 
In Sections~\ref{sec::scheme2} and~\ref{sec:scheme2}, we study certain schemes that use these sharing polynomials in a straightforward manner.
Later, in Section~\ref{sec:scheme3}, we design a method to synthesize these sharing polynomials and obtain a scheme with better performance.

\subsection{Scheme 2 (Lagrange Coded Computing in $L$ groups)}\label{sec::scheme2}

The Master Node splits the Worker Nodes into $L$ many groups, $G_1,\ldots,G_L$, where $G_j$ contains $d_j$ Worker Nodes and $\sum_{j\in[L]} d_j = d$ . Let $\alpha_{j_i} = \alpha_{i+\sum_{k<j} d_k}$. Within each group, $G_j$, the Master Node sends $f_j(\alpha_{j_i})$ to the $i$-th Worker Node in it, asks the worker node to compute 
\begin{align}\label{eq::workernode_h_scheme2}
h_j\left(\alpha_{j_i}\right) = \psi_j\left(f_j\left(\alpha_{j_i}\right)\right),
\end{align}
and sends it back to the Master Node. By doing this, the Master Node expects to obtain from $G_j$,
\begin{align}\label{resp::schemeI}
    \left(h_j\left(\alpha_{j_1}\right), \ldots, h_j\left(\alpha_{j_{d_j}}\right)\right),
\end{align}
and it is a codeword of an $\rs\left(d_j, K^{(j)}\right)$, where $K^{(j)} = D_j(\ell_j+T-1)+1$. Note that, recovering all $h_j$'s, $j\in[L]$, gives us all required computations. However, recovering $h_j$ from $G_j$ only gives us the required computations which involve $\psi_j$. This fact undermines the ability of the system to tolerate stragglers.  The values of performance metrics are formulated in the theorem below.

\begin{theorem}[Scheme 2]\label{thm:schemeI}
    Fix $A$ and $T$, and for each $j\in[L]$, set $K_2^{(j)} \triangleq D_j(\ell_j+T-1)+1$. Further, for each $j\in[L]$, choose $d_j \ge K_2^{(j)} + 2A$.

    Suppose that there are $d$ Worker Nodes, of which at most $A$ are adversarial and at most $T$ are colluding. Then Scheme 2 is $T$-secure, correct and achieves the following metrics.
    \begin{itemize}[leftmargin=*]
        \item Straggler Resistance: $\max\limits_{d_1,\ldots,d_L}\left\{\min\limits_{j\in[L]}\left\{d_j-K_2^{(j)}-2A\right\}\right\}$.
        \item Upload Cost: $\sum_{j\in[L]}d_js_j$.
        \item Download Cost: $\sum_{j\in[L]}\left(K_2^{(j)}+2A\right)$.
        \item Computation in MN: $(\ell_j+T)s_j$ multiplications of $\cdot$'s for group $G_j$.
        \item Computation in WN: no multiplications of $\cdot$'s and the number of multiplications $\times$'s in each of $\psi_j$ to compute.
    \end{itemize}
\end{theorem}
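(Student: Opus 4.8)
The plan is to mirror the proof of Theorem~\ref{thm:scheme0}, treating each group $G_j$ as an independent LCC instance and then assembling the per-group guarantees. First I would establish correctness. Since $f_j$ has degree $\ell_j + T - 1$ and $\psi_j$ has total degree $D_j$, the composed polynomial $h_j = \psi_j \circ f_j$ from~\eqref{eq::workernode_h_scheme2} has degree $D_j(\ell_j + T - 1)$; hence the response vector~\eqref{resp::schemeI} is a codeword of an $\rs(d_j, K_2^{(j)})$ code. Because the $A$ adversaries present in the whole system can place at most $A$ of their members in any single group, collecting $K_2^{(j)} + 2A$ responses from $G_j$ lets the Master Node decode $h_j$ despite up to $A$ corrupted entries, by the standard Reed--Solomon error-correction bound~\cite[Ch.~6]{assympdecode}. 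Recovering all of $h_1,\ldots,h_L$ then yields every computation in~\eqref{comp}.

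Next I would read off the cost and computation metrics by summing single-group contributions. Each of the $d_j$ workers in $G_j$ receives $f_j(\alpha_{j_i})\in\mathbb{U}^{s_j}$, so the upload is $\sum_{j\in[L]} d_j s_j$; the download is $\sum_{j\in[L]}(K_2^{(j)}+2A)$ elements of $\mathbb{V}$, one per collected response. For the Master Node, evaluating $f_j$ (which interpolates the $\ell_j + T$ points of dimension $s_j$ in~\eqref{eq:sharingpoly}) at a single point costs $(\ell_j+T)s_j$ scalar multiplications under the precomputation assumption, giving the per-group figure. The worker computation is exactly the $\times$-multiplications appearing in $\psi_j$, with no $\cdot$-multiplications. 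For security, I would defer to the argument for Theorem~\ref{theorem:1}, as the paper does for all schemes.

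The part that requires genuine care, and the main obstacle, is the straggler-resistance formula, which is structurally different from the single-round Scheme~1. The key observation is that the scheme terminates only once \emph{every} group has individually received enough responses: group $G_j$ tolerates at most $d_j - K_2^{(j)} - 2A$ stragglers before its decoding fails. I would then argue that, for a fixed partition $(d_1,\ldots,d_L)$, an adversarial placement of stragglers concentrates them in the group of smallest slack, so the number tolerated equals $\min_{j\in[L]}\{d_j - K_2^{(j)} - 2A\}$ rather than the naive global count $d - \sum_{j}(K_2^{(j)}+2A)$; spreading stragglers across groups is strictly less damaging than piling them into the tightest one. Finally, since the partition is itself a design parameter, subject to $\sum_{j\in[L]} d_j = d$ and the feasibility constraint $d_j \ge K_2^{(j)}+2A$, the Master Node optimizes over it to obtain $\max_{d_1,\ldots,d_L}\{\min_{j\in[L]}\{d_j - K_2^{(j)} - 2A\}\}$. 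The subtlety to justify carefully is precisely this worst-case placement, which is what forces the $\min$-then-$\max$ structure.
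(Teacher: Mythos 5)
Your proposal is correct and follows essentially the same route as the paper: per-group Reed--Solomon decoding with the worst case of all $A$ adversaries landing in one group, summed communication and computation costs, the $\min$-then-$\max$ straggler argument over the partition $(d_1,\ldots,d_L)$, and security deferred to the proof of Theorem~\ref{theorem:1}. No substantive differences to report.
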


\begin{proof}
    We consider the worst-case scenario when all $A$ adversaries and $T$ colluding workers are in the same group. Since $K^{(j)}-1$ corresponds to the degree of the polynomial $h_j$ defined in \eqref{eq::workernode_h_scheme2} and Worker Node $i$ computes $h_j\left(\alpha_{j_i}\right)$, the $d_j$-tuple of computations \eqref{resp::schemeI} can be viewed as a codeword of an $\rs\left(d_j, K_2^{(j)}\right)$ code. Hence, from each group $G_j$, with $K_2^{(j)}+2A$ responses, the Master Node can correctly recover $h_j$, even in the presence of $A$ adversarial responses (see for example~\cite[Ch.~6]{assympdecode}).

    \noindent \textit{Stragglers Resistance.} In each group $G_j$, since we require $K_2^{(j)}+2A$ responses, we can tolerate $d_j-K_2^{(j)}-2A$ stragglers. But, the Master Node needs to recover from all groups. Hence, given $d_1,\ldots,d_L$, it can only tolerate $\min\limits_{j\in[L]}\left\{d_j-K_2^{(j)}-2A\right\}$. However, the Master Node has the freedom to decide how to split Worker Nodes into $L$ groups to achieve the highest resistance to stragglers. Hence, this scheme can tolerate $\max\limits_{d_1,\ldots,d_{L}}\left\{\min\limits_{j\in[L]}\left\{d_j-K_2^{(j)}-2A)\right\}\right\}$ stragglers.

    \noindent \textit{Upload and Download Cost.} For each group $G_j$, the Master Node sends $f_j(\alpha_{j_i})$ to the $i$-th Worker Node in it while each of them contains $s_j$ elements of $\mathbb{U}$. So, the Upload Cost is $\sum_{j\in[L]} d_js_j$ elements of $\mathbb{U}$. To do the recovery, the Master Node downloads $K_2^{(j)}+2A$ elements of $\mathbb{V}$ from each group $G_j$. As a result, the Download Cost is $\sum_{j\in[L]} \left(K_2^{(j)}+2A\right)$.
    
    \noindent \textit{Computation in MN.} The Master Node computes the values of sharing polynomials at some evaluation point, which interpolates $\ell_j$ points of dimension $s_j$ and $T$ random points of the same dimension. Assuming that all multiplications of field elements are pre-computed), the Master Node performs $(\ell_j+T)s_j$ scalar multiplications for each node in group $G_j$.

    \noindent \textit{Computation in WN.} The Worker Nodes in group $G_j$ apply $\psi_j$ to their received shared information. So, the number of $\times$ performed by the Worker Nodes in group $G_j$ is equal to the number of $\times$ in $\psi_j$.

    \noindent \textit{Security. } Since the security proofs of all Schemes~1--4 are similar, we only provide a detailed proof for Theorem~\ref{theorem:1} (see Section~\ref{sec:scheme3}).
\end{proof}

\subsection{Scheme 3 (Lagrange Coded Computing in L rounds)}\label{sec:scheme2}

The Master Node sends $\left(f_1(\alpha_i),\ldots,f_{L}(\alpha_i)\right)$ to the $i$-th Worker Node and requests to compute the values
\begin{align}\label{eq::scheme3_h}
    \pmb{h}(\alpha_i)&=(h_j(\alpha_i))_{j\in[L]}^T\notag\\
    &=\left(\psi_1(f_1(\alpha_i)),\psi_2(f_2(\alpha_i)),\ldots,\psi_L(f_L(\alpha_i))\right)^T.
\end{align}
As a result, the computations of all $d$ involved Worker Nodes can be represented as
{\small
\begin{align}\label{resp::scheme3}
    (\pmb{h}(\alpha_1),\ldots,\pmb{h}(\alpha_d))
    =\left(\begin{pmatrix}
    h_1(\alpha_1)\\
    h_2(\alpha_1)\\
    \vdots\\
    h_L(\alpha_1)
    \end{pmatrix},\ldots,
    \begin{pmatrix}
    h_1(\alpha_d)\\
    h_2(\alpha_d)\\
    \vdots\\
    h_L(\alpha_d)
    \end{pmatrix}\right).
    \end{align}
}%
\noindent The $j$-th row of \eqref{resp::scheme3} is a codeword of an $\rs(d, K^{(j)})$ code, where $K^{(j)} = D_j(\ell_j+T-1)+1$. 
Our aim is to recover all $h_j$'s, and hence, the required computations. 
There are two ways to perform the recovery, the Worker Node sends its computations all at once or one by one ($j$-th computation in round $j$). For both approaches, the Master Node requires the same amount of responsive Worker Nodes to recover $\pmb{h}$. However, some $h_j$ might have a lower degree than the others, hence, it requires fewer responses to recover. This leads to a higher Download Cost if we perform the first approach, in comparison to the second approach. The values of performance metrics (by using the second approach) are formulated in the theorem below.

\begin{theorem}[Scheme 3]\label{thm:scheme2} Fix $A$ and $T$, and for each $j\in[L]$, set $K_3^{(j)}\triangleq D_j(\ell_j+T-1)+1$. Furthermore, choose $d\ge \max\limits_{j\in[L]}K_3^{(j)} + 2A$.
Suppose that there are $d$ Worker Nodes, of which at most $A$ are adversarial and at most $T$ are colluding. Then Scheme~3 is $T$-secure, correct and achieves the following metrics.
\begin{itemize}[leftmargin=*]
    \item Straggler Resistance: $d-\max\limits_{j\in[L]}K_3^{(j)}-2A$.
    \item Upload Cost: $d\sum_{j\in[L]} s_j$
    \item Download Cost: $\sum_{j\in[L]} \left(K_3^{(j)}+2A\right)$
    \item Computation\,in\,MN: $\sum_{j\in[L]}(\ell_j+T)s_j$ multiplications of $\cdot$'s.
    \item Computation\,in\,WN: no multiplications of $\cdot$'s and the total number of multiplications $\times$'s in all $\psi_j$ to compute.
\end{itemize}
    
\end{theorem}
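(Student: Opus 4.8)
The plan is to mirror the proofs of Theorems~\ref{thm:scheme0} and~\ref{thm:schemeI}, since Scheme~3 reuses the same row-by-row Reed-Solomon recovery argument on the response matrix~\eqref{resp::scheme3}. The only genuinely new feature is the decoupling between straggler resistance and download cost induced by the round-by-round delivery of computations, which I would handle carefully.

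First, for \emph{correctness}, I would observe that $h_j = \psi_j\circ f_j$ is the composition of a total-degree-$D_j$ polynomial with a degree-$(\ell_j+T-1)$ polynomial, so $\deg h_j = D_j(\ell_j+T-1) = K_3^{(j)}-1$. Hence the $j$-th row of~\eqref{resp::scheme3} is a codeword of an $\rs(d,K_3^{(j)})$ code, and standard decoding (e.g.~\cite[Ch.~6]{assympdecode}) recovers $h_j$ from any $K_3^{(j)}+2A$ of its entries, even when $A$ of them are adversarial. Since all $L$ rows must be decoded from a common pool of responding workers, the binding constraint is the row of largest degree, requiring $\max_{j\in[L]}K_3^{(j)}+2A$ responses; \emph{straggler resistance} $d-\max_{j\in[L]}K_3^{(j)}-2A$ then follows immediately.

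For \emph{upload cost}, each worker receives $(f_1(\alpha_i),\ldots,f_L(\alpha_i))$, comprising $\sum_{j\in[L]}s_j$ elements of $\mathbb{U}$, giving $d\sum_{j\in[L]}s_j$ in total. The key observation for \emph{download cost} is to invoke the second (round-by-round) protocol: in round $j$ the Master Node needs only $K_3^{(j)}+2A$ evaluations of $h_j$ to decode that row, so summing over $j$ yields $\sum_{j\in[L]}(K_3^{(j)}+2A)$, which is no larger than the $L(\max_{j\in[L]}K_3^{(j)}+2A)$ incurred by the naive all-at-once delivery. For \emph{computation in the Master Node}, each $f_j$ interpolates $\ell_j+T$ points of dimension $s_j$, costing $(\ell_j+T)s_j$ scalar multiplications per worker and $\sum_{j\in[L]}(\ell_j+T)s_j$ in total; for \emph{computation in the Worker Node}, each worker evaluates every $\psi_j$ once on $f_j(\alpha_i)$, incurring no scalar $\cdot$'s and a $\times$-count equal to the total over all $\psi_j$.

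The step needing the most care---and the main obstacle---is articulating why straggler resistance is governed by $\max_{j}K_3^{(j)}$ whereas download cost decouples into a sum of individual $K_3^{(j)}$. The point is that all rows are decoded from the \emph{same} set of responding workers (forcing the maximum in the worker count), yet the number of transmitted $\mathbb{V}$-elements can be tailored per row under the round-by-round protocol, since a worker whose entry is needed only to complete a low-degree row need not transmit its high-degree-row entry. I would state this ``number of responding workers versus number of downloaded elements'' distinction explicitly. Finally, the \emph{$T$-security} claim I would defer to the unified argument given for Theorem~\ref{theorem:1} in Section~\ref{sec:scheme3}, exactly as done for Schemes~1 and~2.
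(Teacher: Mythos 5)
Your proposal is correct and follows essentially the same route as the paper's proof: row-by-row Reed--Solomon decoding of~\eqref{resp::scheme3} with $K_3^{(j)}+2A$ responses per row, the maximum over $j$ governing straggler resistance, the per-round sum governing download cost, identical accounting for the upload and computation metrics, and security deferred to the argument for Theorem~\ref{theorem:1}. Your added justification of $\deg h_j = D_j(\ell_j+T-1)$ via composition and your explicit remark on why the worker count takes a maximum while the download cost takes a sum are merely more careful statements of what the paper asserts implicitly.
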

\begin{proof}
    We consider the worst-case scenario when we want to tolerate $A$ adversaries and $T$ colluding workers in each round. Since $K^{(j)}-1$ corresponds to the degree of polynomial defined in \eqref{eq::scheme3_h} and Worker Node $i$ computes $h_j(\alpha_i)$ in the $j$-th round, the $d_j$-tuple of computations in the $j$-th row of \eqref{resp::scheme3} can be viewed as a codeword of an $\rs\left(d_j,K^{(j)}_3\right)$ code. Hence, in round $j$, with $K_3^{(j)}+2A$ responses, the Master Node can correctly recover $h_j$, even in the presence of $A$ adversarial responses (see for example~\cite[Ch.~6]{assympdecode}).

    \noindent\textit{Stragglers Resistance.} The Master Node performs $L$ rounds of downloading phase. Hence, the Master Node requires $\max\limits_{j\in[L]}\left\{K_3^{(j)}+2A\right\}$ nodes to respond. In other words, the scheme tolerates $d-\max\limits_{j\in[L]}\left\{K_3^{(j)}\right\}-2A$ stragglers.

    \noindent\textit{Upload and Download Cost.} The Master Node sends $f_1\left(\alpha_{i}\right),\ldots,f_L\left(\alpha_{i}\right)$ to the $i$-th Worker Node while each $f_j\left(\alpha_{i}\right)$ contains $s_j$ elements of $\mathbb{U}$. So, the Upload Cost is $d\sum_{j\in[L]} s_j$ elements of $\mathbb{U}$. To do the recovery, in each round $j\in[L]$, the Master Node downloads $K_3^{(j)}+2A$ elements of $\mathbb{V}$. As a result, the Download Cost is $\sum_{j\in[L]} \left(K_3^{(j)}+2A\right)$.

    \noindent\textit{Computation in MN}. The Master Node computes the values of sharing polynomials at some evaluation point. To compute one value of $f_j$, which interpolates $\ell_j$ points of dimension $s_j$ and $T$ random points of the same dimension. Assuming that all multiplications of field elements are pre-computed, the Master Node performs $\sum_{j\in[L]}(\ell_j+T)s_j$ scalar multiplications for each node.

    \noindent\textit{Computation in WN}. The Worker Nodes apply $\psi_1,\ldots,\psi_L$ to their received shared information. So, the number of $\times$ performed by the Worker Nodes is equal to the number of $\times$ in all $\psi_j$.

    \noindent \textit{Security.} Since the security proofs of all Schemes~1--4 are similar, we only provide a detailed proof for Theorem~\ref{theorem:1} (see Section~\ref{sec:scheme3}).
\end{proof}

\subsection{Scheme 4 (Lagrange Coded Computing in One Round)}\label{sec:scheme3}
The Master Node sends $\left(f_1(\alpha_i),\ldots,f_{L}(\alpha_i)\right)$ to the $i$-th Worker Nodes and asks the $i$-th Worker Node to compute
\begin{align}\label{eq:scheme3h}
    h(\alpha_i) = \sum_{j\in[L]}\psi_j(f_j(\alpha_i))\prod_{\beta\in\pmb{\beta}\setminus\pmb{\beta}_j}(\alpha_i-\beta).
\end{align}
The Master Node expects to obtain the following values
\begin{align}\label{eq:codeword}
\left(h(\alpha_1),h(\alpha_2),\ldots,h(\alpha_{d})\right).
\end{align}
and it is a codeword of an $\rs(d,K)$ code, where $K = \max\limits_{j\in[L]}\left\{D_j(\ell_j+T-1)+\ell-\ell_j\right\}+1$. It can be verified that for all $i\in[\ell_j], j\in[L]$ we have
\begin{align}\label{eq:fact}
    h\left(\beta_{i}^{(D,j)}\right) = \gamma_i^{(j)}\psi_j\left(\bX_{i}^{(j)}\right),
\end{align}
for some constants $\gamma_i^{(j)}\in\FF$. Hence, by recovering $h$, the Master Node can obtain all required computations. The values of performance metrics are formulated in the theorem below.

\begin{theorem}[Scheme 4]\label{theorem:1}
    Fix $A$ and $T$, and set $K_4\triangleq \max\limits_{j\in[L]}\left\{D_j(\ell_j+T-1)+\ell-\ell_j\right\}+1$. Further, choose $d\ge K_4+2A$.
    Suppose that there are $d$ Worker Nodes, of which at most $A$ are adversarial and at most $T$ are colluding. Then Scheme~4 is $T$-secure, correct, and achieves the following metrics.
    \begin{itemize}[leftmargin=*]
        \item Straggler Resistance: $d-K_4-2A$.
    \item Upload Cost: $d\sum_{j\in[L]} s_j$.
    \item Download Cost: $K_4+2A$.
    \item Computation\,in\,MN: $\sum_{j\in[L]}(\ell_j+T)s_j$ multiplications of $\cdot$'s.
    \item Computation\,in\,WN: $L$ multiplications of $\cdot$'s and the total number of multiplications $\times$'s in all $\psi_j$ to compute.
    \end{itemize}
\end{theorem}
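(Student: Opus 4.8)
The plan is to establish the four assertions---correctness together with the claimed code structure, the communication counts, the computation counts, and $T$-security---in turn, recycling the Reed--Solomon decoding argument already used in Theorems~\ref{thm:scheme0}--\ref{thm:scheme2} and treating the security claim as the only genuinely new ingredient, since it was deferred in the earlier proofs.

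\textbf{Correctness and code structure.} First I would pin down $\deg h$ for the polynomial $h$ of~\eqref{eq:scheme3h}. Each summand factors as $\psi_j\circ f_j$, of total degree $D_j(\ell_j+T-1)$ since $\deg f_j=\ell_j+T-1$ and $\psi_j$ has total degree $D_j$, multiplied by the vanishing polynomial $\prod_{\beta\in\pmb\beta\setminus\pmb\beta_j}(\alpha-\beta)$; adding these two degrees and maximizing over $j\in[L]$ gives $\deg h=K_4-1$. Hence the response vector~\eqref{eq:codeword} is the evaluation of one degree-$(K_4-1)$ polynomial at the $d$ distinct points of $\pmb\alpha$, i.e.\ a codeword of an $\rs(d,K_4)$ code. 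Next I would confirm~\eqref{eq:fact}: evaluating $h$ at $\beta_i^{(D,j)}$, every summand with index $k\neq j$ vanishes because $\beta_i^{(D,j)}\in\pmb\beta_j\subseteq\pmb\beta\setminus\pmb\beta_k$ supplies a zero factor, whereas the $k=j$ summand equals $\gamma_i^{(j)}\psi_j(f_j(\beta_i^{(D,j)}))=\gamma_i^{(j)}\psi_j(\bX_i^{(j)})$ with $\gamma_i^{(j)}=\prod_{\beta\in\pmb\beta\setminus\pmb\beta_j}(\beta_i^{(D,j)}-\beta)\neq0$. Thus recovering $h$ returns every required computation up to a known nonzero scalar, and standard Reed--Solomon decoding recovers $h$ from any $K_4+2A$ of the $d$ responses despite $A$ erroneous ones; this yields both correctness and the straggler resistance $d-K_4-2A$.

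\textbf{Communication and computation counts.} These mirror the bookkeeping of the earlier schemes. The upload cost is $d\sum_{j\in[L]}s_j$ since each worker receives $(f_1(\alpha_i),\ldots,f_L(\alpha_i))\in\prod_{j\in[L]}\mathbb{U}^{s_j}$, and the download cost is the $K_4+2A$ responses in $\mathbb{V}$. Evaluating the $L$ sharing polynomials, each interpolating $\ell_j$ data points and $T$ random points of dimension $s_j$, costs the Master Node $\sum_{j\in[L]}(\ell_j+T)s_j$ scalar multiplications per node. A worker evaluates every $\psi_j$ (accounting for the $\times$-multiplications in all $\psi_j$) and then multiplies each of the $L$ outputs $\psi_j(f_j(\alpha_i))\in\mathbb{V}$ by the precomputed scalar $\prod_{\beta\in\pmb\beta\setminus\pmb\beta_j}(\alpha_i-\beta)\in\FF$---exactly the $L$ extra scalar multiplications $\cdot$ relative to Scheme~3.

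\textbf{Security, the main obstacle.} Here a full argument is needed. I would show $I(\bX;\{\bY_i\}_{i\in\T})=0$ for every colluding set $\T\subseteq[d]$ with $|\T|=T$, where $\bY_i=(f_1(\alpha_i),\ldots,f_L(\alpha_i))$. Fix $j$ and a value $\bX=\pmb{x}$; by Lagrange interpolation $f_j$ is an affine function of the random pad $(\bZ_1^{(j)},\ldots,\bZ_T^{(j)})$, so the block $(f_j(\alpha_i))_{i\in\T}$ is an affine image of that pad. The crux is to show this map is a bijection on $\mathbb{U}^{s_jT}$: as $f_j$ has degree $\ell_j+T-1$ and is already fixed at the $\ell_j$ points of $\pmb\beta_j$, prescribing its values at the $T$ further distinct points $\{\alpha_i\}_{i\in\T}$ (which lie outside $\pmb\beta$) determines $f_j$ uniquely, and thereby its values at $\{\beta_i^{(R)}\}_{i\in[T]}$, giving the inverse map. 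Since the pad is uniform and independent of $\pmb{x}$, each block $(f_j(\alpha_i))_{i\in\T}$ is then uniform and independent of $\pmb{x}$; as the pads are drawn independently across $j\in[L]$, the entire view $\{\bY_i\}_{i\in\T}$ is uniform and independent of $\bX$, which is the $T$-security claim. The step I expect to demand the most care is the invertibility of this interpolation map for an \emph{arbitrary} collusion pattern---i.e.\ that for every $T$-subset of worker points, those points together with the data points of $\pmb\beta_j$ still form a valid interpolation set---where the disjointness $\pmb\alpha\cap\pmb\beta=\emptyset$ and the degree budget $\ell_j+T-1$ are precisely what is used.
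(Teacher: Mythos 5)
Your treatment of correctness, straggler resistance, and the communication/computation counts matches the paper's argument essentially step for step: the paper likewise reduces everything to the observation that \eqref{eq:codeword} is a codeword of an $\rs(d,K_4)$ code and invokes standard Reed--Solomon decoding with $K_4+2A$ responses. One bookkeeping caution: read literally, the product in \eqref{eq:scheme3h} runs over $\pmb{\beta}\setminus\pmb{\beta}_j$, which has $\ell+T-\ell_j$ elements, so ``adding the two degrees'' would give $D_j(\ell_j+T-1)+\ell+T-\ell_j$; the stated $K_4$ only comes out if the product is taken over the data points $\bigl(\bigcup_{k}\pmb{\beta}_k\bigr)\setminus\pmb{\beta}_j$ only, as Example~\ref{example::scheme3} confirms is intended---you silently adopted the intended reading, which is fine but worth flagging. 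Where you genuinely diverge is the security proof, which is also the only part the paper works out in detail for this theorem. The paper writes the colluders' view as $[\vX,\vZ]\cdot G^{(j)}_T$, splits $G^{(j)}$ into a data block and a randomness block, and computes $I\left(\vX;\pmb{Y}^{(j)}_T\right)=\rank\left(G^{(j)}_T\right)-\rank\left(G^{(j,\text{bottom})}_T\right)\le T-T=0$ via an entropy chain, using invertibility of the $T\times T$ bottom submatrix cited from \cite{assympdecode}. You instead argue that, conditioned on the data, the map from the random pad to the colluders' view is an affine bijection, because a degree-$(\ell_j+T-1)$ polynomial already fixed at the $\ell_j$ data points is determined by its values at any $T$ further distinct points outside $\pmb{\beta}$; hence the view is uniform and independent of $\bX$. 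These rest on the same underlying fact---your interpolation bijection is exactly the invertibility of $G^{(j,\text{bottom})}_T$---but your packaging is more elementary and self-contained, avoids the paper's somewhat informal ``all solutions are equally likely'' entropy steps, and, unlike the paper's per-$j$ computation, explicitly handles the joint view across all $L$ sharing polynomials by invoking independence of the pads. Both routes establish the theorem.
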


\begin{proof}
Since $K-1$ corresponds to the degree of the polynomial $h$ defined in \eqref{eq:scheme3h} and Worker Node $i$ computes $h(\alpha_i)$, the $d$-tuple of computations $\eqref{eq:codeword}$ can be viewed as a codeword of an $\rs(d,K_4)$ code. Hence, with $K_4+2A$ responses, the Master Node can correctly recover the polynomial $h$, even in the presence of $A$ adversarial responses (see for example~\cite[Ch.~6]{assympdecode}).

\noindent\textit{Stragglers Resistance}. Since we require $K_4+2A$ responses, we can tolerate $d-K_4-2A$ stragglers.

\noindent\textit{Upload and Download Cost}. The Master Node sends $f_1\left(\alpha_{i}\right),\ldots,f_L\left(\alpha_{i}\right)$ to the $i$-th Worker Node while each $f_j\left(\alpha_{i}\right)$ contains $s_j$ elements of $\mathbb{U}$. So, the Upload Cost is $d\sum_{j\in[L]} s_j$ elements of $\mathbb{U}$. To do the recovery, the Master Node needs to download $K_4+2A$ elements of $\mathbb{V}$.

\noindent\textit{Computation in MN}. The Master Node computes the values of sharing polynomials at some evaluation point. To compute one value of $f_j$, which interpolates $\ell_j$ points of dimension $s_j$ and $T$ random points of the same dimension. Assuming that all multiplications of field elements are pre-computed, the Master Node performs $\sum_{j\in[L]}(\ell_j+T)s_j$ scalar multiplications for each node.

\noindent\textit{Computation in WN}. The Worker Nodes apply $\psi_1,\ldots,\psi_L$ to the received shared information. So, the number of $\times$ performed by the Worker Nodes is equal to the number of $\times$ in all $\psi_j$. In addition, the Worker Nodes also need to do $L$ scalar multiplications to compute the necessary value of $h$.

\noindent\textit{Security}. The notion of security is equivalent to zero-mutual information between the dataset $\bX$ and $T$ values of sharing polynomials. We do note that, for all $j\in [L]$, the shared data can be written as
{\begin{align}\label{eq:rep}
    [f_j(\alpha_1),&\ldots,f_j(\alpha_d)]\notag\\&= \left[\bX_1^{(j)},\ldots,\bX_{\ell_j}^{(j)},\bZ_1^{(j)},\ldots,\bZ_T^{(j)}\right]\cdot G^{(j)}.
\end{align}}%
We can split $G^{(j)}$ into two parts, 
    $\begin{pmatrix}
    G^{(j,\text{top})}\\
    G^{(j,\text{bottom})}
    \end{pmatrix}$, such that, for $r\in[\ell_j],c\in[d]$,
{\begin{align}
    &G^{(j,\text{top})}_{r,c}\notag\\ &= \frac{\prod_{k\ne r} \left(\alpha_c-\beta_k^{(D,j)}\right)}{\prod_{k\ne r} \left(\beta_r^{(D,j)}-\beta_k^{(D,j)}\right)}\cdot\frac{\prod_{k\in[T]} \left(\alpha_c-\beta_k^{(R)}\right)}{\prod_{k\in[T]}\left(\beta_r^{(D,j)}-\beta_k^{(R)}\right)},
    \end{align}}
and for $r \in[T],c\in[d]$,
\begin{align}
    &G^{(j,\text{bottom})}_{r,c}\notag\\ &= \frac{\prod_{k\in[\ell_j]}\left(\alpha_c-\beta_k^{(D,j)}\right)}{\prod_{k\in[\ell_j]}\left(\beta_r^{(R)}-\beta_k^{(D,j)}\right)}\cdot\frac{\prod_{k\ne r}\left(\alpha_c-\beta_k^{(R)}\right)}{\prod_{k\ne r}\left(\beta_r^{(R)}-\beta_k^{(D,j)}\right)},
\end{align}
where $G^{(j,\cdot)}_{r,c}$ is the element in the $r$-th row and $c$-th column of the matrix $G^{(j,\cdot)}$.
Then, we can rewrite \eqref{eq:rep} as
\begin{align}
    &[f_j(\alpha_1),\ldots,f_j(\alpha_d)]\notag  \\ &=\left[\bX_1^{(j)},\ldots,\bX_{\ell_j}^{(j)}\right]\cdot G^{(j,\text{top})}+ \left[\bZ_1^{(j)},\ldots,\bZ_T^{(j)}\right]\cdot G^{(j,\text{bottom})}.
\end{align}
Let $\Vec{\bX}^{(j)} = \left[\bX_1^{(j)},\ldots,\bX_{\ell_j}^{(j)}\right]$, $\Vec{\bZ}^{(j)} =\left[\bZ_1^{(j)},\ldots,\bZ_T^{(j)}\right]$, and $\pmb{Y}^{(j)}_T$ be any $T$ components of $[f_j(\alpha_1),\ldots,f_j(\alpha_d)]$ available to colluding nodes, for simplicity of notation, we define $\pmb{Y}^{(j)}_T$ as the first $T$ components of the shared data. By applying the chain rule for the mutual information and by noting that the entropy of any file is equal to the number of elements in it by assuming source coding is applied before, we have
\begin{alignat}{2}\label{eq:sec}
    &I\left(\Vec{\bX}^{(j)};\pmb{Y}^{(j)}_T\right)\notag\\ 
    &= I\left(\vX,\vZ;\pmb{Y}_T\right) - I\left(\vZ;\pmb{Y}_T|\vX\right)\notag\\
    &= H\left(\vX,\vZ\right) - H\left(\vX,\vZ|\pmb{Y}_T\right)\notag\\&\quad\quad- H\left(\vZ|\vX\right) + H\left(\vZ|\vX,\pmb{Y}_T\right)\notag\\
    &= H\left(\vX\right)+H\left(\vZ\right) - H\left(\vX,\vZ|\pmb{Y}_T\right) \notag\\&\quad\quad- H\left(\vZ|\vX\right) + H\left(\vZ|\vX,\pmb{Y}_T\right)\notag\\
    &= \ell_j + T -H\left(\vX,\vZ|\pmb{Y}_T\right)\notag\\ &\quad\quad- H\left(\vZ|\vX\right) + H\left(\vZ|\vX,\pmb{Y}_T\right)\notag\\
    &= \ell_j + T -H\left(\vX,\vZ|\pmb{Y}_T\right)\notag\\ &\quad\quad- H\left(\vZ\right) + H\left(\vZ|\vX,\pmb{Y}_T\right)\notag\\
    &=\ell_j -H\left(\vX,\vZ|\pmb{Y}^{(j)}_T\right) + H\left(\vZ|\vX,\pmb{Y}^{(j)}_T\right).
\end{alignat}
Note that, 
\begin{align}
    &\pmb{Y}^{(j)}_T = \left[\vX,\vZ\right]\cdot G^{(j)}_T\notag\\&\implies H\left(\vX,\vZ|\pmb{Y}^{(j)}_T\right) = \ell_j+T-\rank\left(G^{(j)}_T\right),
\end{align}
where $G^{(j)}_T$ is the first $T$ columns of $G$. This is because, given $Y_T^{(j)}$, all solutions $\left(\vX,\vZ\right)$ are equaly likely. We also have
\begin{align}
    &\vZ \cdot G_T^{(j,\text{bottom})} = \pmb{Y}^{(j)}_T - \vX\cdot G_T^{(j,\text{top})}\\&\implies H\left(\vZ|\vX,\pmb{Y}^{(j)}_T\right) = T-\rank\left(G_T^{(j,\text{bottom})}\right).
\end{align}
This is because, given $\left(\vX,\pmb{Y}^{(j)}_T\right)$, all solutions $\vZ$ are equally likely. Therefore, \eqref{eq:sec} can be expressed as
\begin{align}
I&\left(\Vec{\bX}^{(j)};\pmb{Y}^{(j)}_T\right)\notag \\
    &= \ell_j -  \left(\ell_j+T-\rank\left(G^{(j)}_T\right)\right) + \left( T-\rank\left(G_T^{(j,\text{bottom})}\right)\right)\notag\\
    &= \rank\left(G^{(j)}_T\right) - \rank\left(G_T^{(j,\text{bottom})}\right)\notag\\
    &\le T - \rank\left(G_T^{(j,\text{bottom})}\right)\notag\\
    &=0,
\end{align}
since $G_T^{(j,\text{bottom})}$ is invertible (see \cite{assympdecode}). 
\end{proof}

\begin{example}[Computation Scheme 4]\label{example::scheme3}
    Consider the task where the Master Node has the dataset $\bX = (A, B, C)$, where $A,B,C\in \FF^{n\times n}$ for some finite field $\FF$ and wants to obtain
    $$
    A^2,\quad B^2,\quad AB + BC^2
    $$
    by employing $d$ Worker Nodes ensuring security against one colluding node and tolerating one adversarial node. The Master Node wants to find $\ell = 3$ values of evaluations for $L = 2$ functions $\psi_1(A) = A^2$, $\psi_1(B) = B^2$, and $\psi_2(A,B,C) = AB + BC^2$. By considering a set of distinct evaluation points $\pmb{\beta} = \left\{\beta_1^{(D,1)},\beta_2^{(D,1)},\beta_1^{(D,2)},\beta_1^{(R)}\right\}$, the Master Node constructs $L = 2$ sharing polynomials satisfying
    \begin{align}
    \begin{cases}
        f_1\left(\beta_1^{(D,1)}\right) = A,\, f_1\left(\beta_2^{(D,1)}\right) = B,\, f_1\left(\beta_1^{(R)}\right) = Z_1,\\
        f_2\left(\beta_1^{(D,2)}\right) = (A,B,C),\, f_2\left(\beta_1^{(R)}\right) = (Z_2,Z_3,Z_4),\\
    \end{cases}
    \end{align}
    where $Z_1,\ldots,Z_4\in\FF^{n\times n}$ are independent random elements uniformly distributed over the same alphabets as $X_i$'s. The polynomials $f_1$ and $f_2$ can be obtained by Lagrange Interpolation. We can see that $f_1$ is a polynomial of degree $2$ and $f_2$ is a polynomial of degree $1$. The Master Node also considers another set of evaluation points $\pmb{\alpha}=\{\alpha_1,\ldots,\alpha_d\}$, $\pmb{\alpha}\cap\pmb{\beta} = \emptyset$ and assigns each Worker Node with a unique evaluation point from $\pmb{\alpha}$. The Master Node sends $(f_1(\alpha_i),f_2(\alpha_i))$ to the $i$-th Worker Node and asks it to compute
    \begin{align}\label{eq::h_in_example1}
        h(\alpha_i) = &\psi_1(f_1(\alpha_i))\left(\alpha_i - \beta_1^{(D,2)}\right)\notag \\&+\psi_2(f_2(\alpha_i)) \left(\alpha_i - \beta_1^{(D,1)}\right) \left(\alpha_i - \beta_2^{(D,1)}\right).
    \end{align}
    As a result, the Master Node expects to obtain
    \begin{align}
        (h(\alpha_1),\ldots, h(\alpha_d)),
    \end{align}
    which is a codeword of an $\rs(d,6)$ code. Hence, with any $8$ responses, the Master Node can recover $h$, in the presence of one adversarial response. We can easily check that
    \begin{align}
        \begin{cases}
            h\left(\beta_1^{(D,1)}\right) &= \gamma_{1}^{(1)} A^2,\vspace{2mm}\\
            h\left(\beta_2^{(D,1)}\right) &=\gamma_{2}^{(1)} B^2,\vspace{2mm}\\
             h\left(\beta_1^{(D,2)}\right) &=\gamma_{1}^{(2)} (AB+ BC^2).
        \end{cases}
    \end{align}
    where $\gamma_{i}^{(j)}$ is some field element $\FF$ (which can be pre-computed).
\end{example}

\section{Results Verification}\label{sec::verification}

In this section, we study schemes that verify the correctness of workers' computations without asking extra nodes to respond. 
In particular, we propose two verification schemes. 
In the first verification scheme, the Master Node randomly generates a nonzero field element $\textrm{v}$ as a private verification key, 
and constructs an additional sharing polynomial by incorporating $\textrm{v}$.
Hence, from the workers' responses, the Master Node computes two values and then uses the private key $\textrm{v}$ to certify correctness.
We remark that similar methods were employed in~\cite{colombo2023,kruglik2023} to enable results verification in private information retrieval protocols.
This verification scheme works for general computations. 

In contrast, our second verification scheme is restricted to computations on square matrices only.
This scheme adapts the famous \textit{Freivalds}' algorithm \cite{freivalds1979} -- a probabilistic randomized algorithm used to verify matrix multiplication.
We remark that there is prior work that adapts Freivalds' algorithm for verification purposes in distributed computing schemes~\cite{rawad2022, tang2022}. 
However, in these works~\cite{rawad2022, tang2022}, to verify the response of a Worker Node, the Master Node performs Frievalds' algorithm for each worker.
On the other hand, our approach outsources these computations to the servers and significantly reduces the computation load for the Master Node. Moreover, as we argue in Section~\ref{sec::verification-B}, the increase in workload for the Worker Nodes is negligible. 
Unfortunately, the drawback of the scheme is that we are unable to identify the malicious nodes (unlike those in ~\cite{rawad2022, tang2022}).

For expository purposes, we discuss our verification methods with respect to Scheme 4 in Section~\ref{sec:scheme3}. 
Nevertheless, these verification techniques are also applicable to any other schemes in Section~\ref{sec::coded_computation}. Before we proceed to the general case, we consider the scenario in Example~\ref{example::scheme3} and highlight the main ideas of our approaches.

\begin{example}[Verification Scheme 4A]
Consider the same setup as in Example~\ref{example::scheme3}. 
To perform the verification, the Master Node generates a random non-zero field element $\textrm{v}$ uniformly distributed over $\FF$ and creates the following additional tasks.
\begin{align*}
\begin{cases}
\psi_1^{(\mathrm{v})}(\mathrm{v}A) &= \mathrm{v}^2A^2,\\ \psi_1^{(\mathrm{v})}(\mathrm{v}B) &= \mathrm{v}^2B^2,\\
\psi_2^{(\mathrm{v})}(\mathrm{v}^3A,\mathrm{v}^3B, \mathrm{v}^2B, \mathrm{v}^2C) &= \mathrm{v}^6 (AB+BC^2),
\end{cases}
\end{align*}

Specifically, these tasks involve the following  polynomials
\begin{align}
            \psi_1^{(\mathrm{v})}(x) &= x^2,\\
            \psi_2^{(\mathrm{v})}(x,y,z_1,z_2) &= xy + z_1z_2^2.
    \end{align}

In other words, the Master Node gives an additional task to the Worker Nodes, with new shares $(f_1^{(\mathrm{v})}(\alpha_i),f_2^{(\mathrm{v})}(\alpha_i))$, to compute new polynomial $h^{(\mathrm{v})}(\alpha_i)$ which is similar to~\eqref{eq::h_in_example1}, but with $\psi_1^{(\mathrm{v})}(f_1^{(\mathrm{v})}(\alpha_i))$ and $\psi_2^{(\mathrm{v})}(f_2^{(\mathrm{v})}(\alpha_i))$. 
As the degrees of $h$ and $h^{(\mathrm{v})}$ are the same, the Master Node does not require more responding nodes to obtain these extra computations. 
This allows the Master Node to perform verification of the required computations by checking that the following holds:
\begin{align}
    \begin{cases}
        \psi_1^{(\mathrm{v})}(\mathrm{v}A) &= \mathrm{v}^2 \psi_1(A)\\
        \psi_1^{(\mathrm{v})}(\mathrm{v}B) &= \mathrm{v}^2\psi_1(B)\\
        \psi_2^{(\mathrm{v})}(\mathrm{v}^3A,\mathrm{v}^3B, \mathrm{v}^2B, \mathrm{v}^2C) &= \mathrm{v}^6 \psi_2(A,B,C).
    \end{cases}
\end{align}

Note that, $\psi_2^{(\mathrm{v})}$ has $4$ inputs, while initial $\psi_2$ (in Example~\ref{example::scheme3}) only has $3$ inputs. This means that to perform verification with this technique, the Upload Cost increases by a factor slightly greater than two. This phenomenon also occurs in general. 
This approach also doubles the workload of the Worker Nodes and the download cost.
To lower the workload and download cost, we propose a second approach.

    \end{example}
\begin{example}[Verification Scheme 4B]
    Consider the same setup as in Example~\ref{example::scheme3}. 
    To perform the required computations and verify their correctness, the Master Node modifies the input matrices. 
    The exact form of modification depends on the structure of the computations. 
    Let $\pmb{0}\triangleq (0,0,\ldots,0)^T$. The Master Node generates random vectors $\textrm{u}_1,\textrm{v}_1,\textrm{u}_2,\textrm{v}_2,\textrm{u}_3,\textrm{v}_3\in \FF^{n}$ uniformly distributed over $\FF^n\setminus\{\pmb{0}\}$. We construct
    \begin{enumerate}[label=(\arabic*)]
        \item $A_1^{(1)} = \begin{bmatrix}
        A & \pmb{0}\\
        \textrm{u}_1^T A & 0
    \end{bmatrix}\text{ and } A_2^{(1)} = \begin{bmatrix}
        A & A\textrm{v}_1\\
        \pmb{0}^T & 0
    \end{bmatrix}
    $ for the computation $A^2$,\vspace{2mm}
    \item $B_1^{(2)} = \begin{bmatrix}
        B & \pmb{0}\\
        \textrm{u}_2^T B & 0
    \end{bmatrix}\text{ and } B_2^{(2)} = \begin{bmatrix}
        B & B\textrm{v}_2\\
        \pmb{0}^T & 0
    \end{bmatrix}
    $ for the computation $B^2$,\vspace{2mm}
    \item $A_1^{(3)} =\begin{bmatrix}
        A & \pmb{0}\\
        \textrm{u}_3^TA & 0
    \end{bmatrix}$, $B_2^{(3)} = \begin{bmatrix}
        B & B\textrm{v}_3\\
        \pmb{0}^T & 0
    \end{bmatrix}$, $B_1^{(3)} = \begin{bmatrix}
        B & \pmb{0}\\
        \textrm{u}_3^T B & 0
    \end{bmatrix}$, $C_2^{(3)} = \begin{bmatrix}
        C & \pmb{0}\\
        \pmb{0}^T & 0
    \end{bmatrix}$, and $C_3^{(3)} = \begin{bmatrix}
        C & C\textrm{v}_3\\
        \pmb{0}^T & 0
    \end{bmatrix}$ for the computation $AB + BC^2$.
    \end{enumerate}
    We can easily see that
    \begin{align*}
        A_1^{(1)}A_2^{(1)} &= \begin{bmatrix}
            A^2 & A^2\textrm{v}_1\\
            \textrm{u}_1^T A^2 & \textrm{u}_1^TA^2\textrm{v}_1
        \end{bmatrix}\vspace{3mm} \\
        B_1^{(2)}B_2^{(2)} &= \begin{bmatrix}
            B^2 & B^2\textrm{v}_2\\
            \textrm{u}_2^T B^2 & \textrm{u}_2^TB^2\textrm{v}_2
        \end{bmatrix}
        \end{align*}
        and
        \begin{align*}
        A_1^{(3)}&B_2^{(3)} + B_1^{(3)}C_2^{(3)}C_3^{(3)} \\&= \begin{bmatrix}
            AB & AB\textrm{v}_3\\
            \textrm{u}_3^TAB & \textrm{u}_3^TAB\textrm{v}_3
        \end{bmatrix} + \begin{bmatrix}
            BC^2 & BC^2\textrm{v}_3\\
            \textrm{u}_3^TBC^2 & \textrm{v}_3^TBC^2\textrm{v}_3
        \end{bmatrix} \\
        &= \begin{bmatrix}
            AB+BC^2 & (AB+BC^2)\textrm{v}_3\\
            \textrm{v}_3^T(AB+BC^2) & \textrm{u}_3^T(AB+BC^2)\textrm{v}_3
        \end{bmatrix}.
    \end{align*}
    
    All the above computations contain $A^2, B^2$ and $AB+ BC^2$ as required.
    The remaining components are functions of these computation results and so,
    we use $\textrm{u}_1,\textrm{v}_1,\textrm{u}_2,\textrm{v}_2,\textrm{u}_3,\textrm{v}_3$ used as verification keys. 
    Hence, we use the remaining components to perform verification. 
    To implement this framework, we have to construct our sharing polynomials based on the modified matrices instead. 
    Note that, the dimension of input matrices does not affect the number of required responses. 
    However, this approach changes the sharing polynomials. 
    In this example, for the initial computations, we have $\psi_1(x) = x^2$ and $\psi_2(x,y,z) = xy+yz^2$. 
    But with modified inputs, we have $\hat{\psi}_1(x,y) = xy$ and $\hat{\psi}_2(x,y,a,b,c) = xy+abc$. Hence, the Master Node can apply Scheme~4, to obtain the computations
    \begin{align}
            &\hat{\psi}_1\left(A_1^{(1)},A_2^{(1)}\right) = A_1^{(1)}A_2^{(1)}\\
            &\hat{\psi}_1\left(B_1^{(1)},B_2^{(1)}\right) = B_1^{(1)}B_2^{(1)}\\
            &\hat{\psi}_2\left(A_1^{(3)},B_2^{(3)}, B_1^{(3)},C_2^{(3)},C_3^{(3)}\right)\notag\\
            &\hspace{24mm}= A_1^{(3)}B_2^{(3)} + B_1^{(3)}C_2^{(3)}C_3^{(3)}
    \end{align}

    Suppose that we want to verify the correctness of $A^2$. 
    The Master Node considers the $n\times n$ matrix on the top left corner of the recovered result $A_1^{(1)}A_2^{(1)}$.
    Let us call this matrix as $M$ and then check all of the following.
    \begin{enumerate}[label=(\arabic*)]
        \item Compute $M\textrm{v}_1$ and check if $M\textrm{v}_1$ is equal to the first $n$ elements of the last column,
        \item Compute $\textrm{u}_1^T M$ and check if $\textrm{u}_1^T M$ is equal to the first $n$ elements of the last row, and
        \item Compute $\textrm{u}_1^T M \textrm{v}_1$ and check if $\textrm{u}_1^T M \textrm{v}_1$ is equal to the element at the bottom left.
    \end{enumerate}
    The remaining computations can also be checked in the same way. 
    Note that, to check the correctness of one computation, the Master Node only needs to perform two matrix-vector multiplications and one vector-vector multiplication. The number of such multiplications is independent of the number of worker responses.
    In contrast, the prior schemes in~\cite{rawad2022,tang2022} require the Master Node to perform three matrix-vector multiplication for {\em each} worker response.
\end{example}

\subsection{Verification Scheme 4A}
In this verification scheme, our goal is to obtain another set of computation results that are related to our initial required computations. 
This relationship is controlled by the Master Node. To do so, the Master Node generates a random nonzero element $\mathrm{v}$ uniformly distributed over $\FF$. 
In addition to the initial set of $L$ sharing polynomials~\eqref{eq:sharingpoly}, the Master Node also considers another $L$ sharing polynomials over the same set of evaluation points $\pmb{\beta}$ and $\pmb{\alpha}$.
The polynomials are chosen such that, for all $j\in[L]$, we have $f_j^{(\mathrm{v})}:\FF \to (\FF^{n\times n})^{
\hat{s}_j}$ for some $\hat{s}_j$ and 
{\small 
\begin{alignat}{4}
    &f_j^{(\textrm{v})}\left(\beta_1^{(D,j)}\right) &&= v_j\left(\bX_1^{(j)}\right)&&,\ldots, f_j^{(\textrm{v})}\left(\beta_{\ell_j}^{(D,j)}\right) &&= v_j\left(\bX_{\ell_j}^{(j)}\right),\notag\\
&f_j^{(\textrm{v})}\left(\beta_1^{(R)}\right) &&= \bZ_1^{(\textrm{v},j)}&&,\ldots,f_j^{(\textrm{v})}\left(\beta_T^{(R)}\right) &&= \bZ_T^{(\textrm{v},j)},
\end{alignat}
}%
\noindent for some functions $v_1,\ldots,v_L$. The Master Node additionally sends $(f_1^{(\textrm{v})}(\alpha_i),\ldots,f_L^{(\textrm{v})}(\alpha_i))$ to the $i$-th Worker Node and asks the $i$-th Worker Node to compute
\begin{align}\label{eq:verifh}
    h^{(\textrm{v})}(\alpha_i) = \sum_{j\in[L]} {\psi}^{(\textrm{v})}_j\left(f^{(\textrm{v})}_j(\alpha_i)\right)\prod_{\beta\in\pmb{\beta}\setminus\pmb{\beta}_j}(\alpha_i-\beta)
\end{align}
\noindent for some $\psi_j^{(\textrm{v})}$. The functions $v_1,\ldots,v_L$ and $\psi_1^{(\textrm{v})},\ldots, \psi_L^{(\textrm{v})}$ are chosen by the Master Node, so that $h^{(\textrm{v})}(\beta) = \textrm{v}^n h(\beta)$, for some $n$, for all $\beta\in\pmb{\beta}\setminus\{\beta_i^{(R)}:i\in[T]\}$. 

One possible construction of $v_1,\ldots,v_L$ is as follows. We can write each $\psi_j$ as the sum of polynomials $p_{j,i}$ of distinct degrees. Suppose that there are $r_j$ many distinct degrees polynomials on $\psi_j$, then we have
\begin{align}
    \psi_j(M_1,\ldots,M_{s_j}) & =  p_{j,1}\left(M_{(j,1)}^{(1)},\ldots, M_{(j,1)}^{(s_{j_1})}\right) + \cdots\notag\\
    & + p_{j,r_j}\left(M_{(j,r_j)}^{(1)},\ldots, M_{(j,r_j)}^{(s_{j_{r_j}})}\right).
\end{align}

For each involved polynomial $p_{j,k}$, we modify its inputs by applying a function $v_{j,k}$ so that the value of polynomial $p_{j,k}$ with the modified inputs is equal to $\mathrm{v}^{n_j}p_{j,k}\left(M_{(j,k)}^{(1)},\ldots, M_{(j,k)}^{(s_{j_k})}\right)$. Let $\delta_{j,k}$ be the degree of $p_{j,k}$. Define $\Delta_j \triangleq \lcm(\delta_{j,1},\ldots,\delta_{j,r_j})$ and set
{\small \begin{align}
    v_{j,k} \left(M_{(j,k)}^{(1)},\ldots, M_{(j,k)}^{(s_{j_k})}\right)= \mathrm{v}^{\frac{\Delta_j}{\delta_{j,k}}} \left(M_{(j,k)}^{(1)},\ldots, M_{(j,k)}^{(s_{j_k})}\right),
\end{align}}
and
\begin{align}
    v_j(M_1,\ldots,M_{s_j}) = \Bigg(&v_{j,1}\left(M_{(j,1)}^{(1)},\ldots, M_{(j,1)}^{(s_{j_1})}\right),\ldots,\notag\\&v_{j,r_j}\left(M_{(j,r_j)}^{(1)},\ldots, M_{(j,r_j)}^{(s_{j_{r_j}})}\right)\Bigg).
\end{align}

Let $\psi_j^{(\mathrm{v})}$ be the polynomial with the same addition and multiplication structure as $\psi_j$ but with $\hat{s}_j = \sum_{k=1}^{r_j}s_{j_k}$ inputs. Then, we can see that
\begin{align}
    \psi_j^{(\mathrm{v})}&\left(v_j(M_1,\ldots,M_{s_j})\right)\notag\\& = \sum_{k=1}^{r_j}p_{j,k}\left(\mathrm{v}^{\frac{\Delta_j}{\delta_{j,k}}} \left(M_{(j,k)}^{(1)},\ldots, M_{(j,k)}^{(s_{j_k})}\right)\right)\notag\\
    &= \mathrm{v}^{\Delta_j} \sum_{k=1}^{r_j} p_{j,k} \left(M_{(j,k)}^{(1)},\ldots, M_{(j,k)}^{(s_{j_k})}\right)\notag\\
    &= \mathrm{v}^{\Delta_j} \psi_j(M_1,\ldots,M_{s_j}).
\end{align}

This implies that for all $i\in[\ell_j]$, $j\in[L]$
\begin{align}
h^{(\mathrm{v})}\left(\beta_i^{(D,j)}\right) &= \gamma_i^{(j)} \psi_j^{(\mathrm{v})}\left(v_j\left(\pmb{X}_i^{(j)}\right)\right)\notag\\ &= \mathrm{v}^{\Delta_j}\gamma_i^{(j)}\psi_j\left(\pmb{X}_i^{(j)}\right)\notag\\&=  \mathrm{v}^{\Delta_j}h\left(\beta_i^{(D,j)}\right),
\end{align}
for some fixed constant $\gamma_i^{(j)}\in\FF$. We are left to show that this verification scheme can detect the presence of incorrect computations with high probability.
\begin{theorem}
    Scheme 4A can detect the presence of incorrect computations in the presence of up to $T$ adversarial nodes with probability $1-O\left(\frac{1}{q}\right)$.
\end{theorem}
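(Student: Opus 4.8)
The plan is to reduce the theorem to a polynomial identity test in the secret key $\mathrm v$ and then invoke a root-counting (Schwartz--Zippel) bound. Suppose the recovered results are incorrect, so that for some index $(i_0,j_0)$ the value $\hat R$ read off from the decoded $h$ differs from the true $R=\gamma_{i_0}^{(j_0)}\psi_{j_0}\!\left(\bX_{i_0}^{(j_0)}\right)$; write $e\triangleq \hat R-R\neq\bzero$. (If the decoder fails to return a codeword, the anomaly is already flagged, so we may assume it outputs $\tilde h,\tilde h^{(\mathrm v)}$.) Let $\hat S$ be the corresponding value from the decoded $h^{(\mathrm v)}$, and recall the honest identity $h^{(\mathrm v)}\!\left(\beta_{i_0}^{(D,j_0)}\right)=\mathrm v^{\Delta_{j_0}}h\!\left(\beta_{i_0}^{(D,j_0)}\right)$, i.e.\ $S=\mathrm v^{\Delta_{j_0}}R$. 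The verifier accepts this coordinate exactly when $\hat S=\mathrm v^{\Delta_{j_0}}\hat R$, which rearranges to
\begin{align}\label{eq:prop-cond}
\hat S-S=\mathrm v^{\Delta_{j_0}}\,e .
\end{align}
Thus a false accept forces the decoding error $\hat S-S$ of the verification task to coincide with $\mathrm v^{\Delta_{j_0}}e$, and I would bound the probability of \eqref{eq:prop-cond} over the uniform choice of $\mathrm v\in\FF\setminus\{\bzero\}$.

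First I would argue that $\mathrm v$ is hidden from the colluders. The verification shares $f_j^{(\mathrm v)}(\alpha_i)$ are masked by the fresh, independent random terms $\bZ_t^{(\mathrm v,j)}$, so the same invertibility argument used for the security claim in Theorem~\ref{theorem:1} shows that any $T$ of these shares are uniform and carry no information about $\mathrm v$. Hence, conditioned on the adversaries' view, $\mathrm v$ is uniform on $\FF\setminus\{\bzero\}$ and the adversarial responses are fixed functions independent of $\mathrm v$. Because the $h$-task itself carries no dependence on $\mathrm v$, the decoded $\tilde h$—and therefore $\hat R$ and $e$—is a constant independent of $\mathrm v$.

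The remaining step is to control $\hat S$ as a function of $\mathrm v$. I would condition on the combinatorial ``branch'' of the decoder, i.e.\ on the set $P$ of $K$ response coordinates from which $h^{(\mathrm v)}$ is interpolated; there are at most $\binom{K+2A}{K}$ such branches, a number independent of $q$. On a fixed branch, $\hat S=\sum_{i\in P}L_i\,w_i(\mathrm v)$ is a fixed $\FF$-linear combination of the responses, where the honest ones $w_i(\mathrm v)=h^{(\mathrm v)}(\alpha_i)$ are polynomials in $\mathrm v$ of bounded degree $d^\star$ (stemming from the powers $\mathrm v^{\Delta_j/\delta_{j,k}}$ inside $v_j$) and the adversarial ones are constants. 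Hence each scalar component of $F(\mathrm v)\triangleq \hat S-S-\mathrm v^{\Delta_{j_0}}e$ is a polynomial in $\mathrm v$ of degree $O(1)$. Choosing a component in which $e\neq\bzero$ and establishing that the associated $F_c\not\equiv 0$ on that branch, root-counting gives $\Pr[F_c=0]=O(1/q)$; a union bound over the finitely many branches preserves the $O(1/q)$ rate and yields the theorem.

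I expect the main obstacle to be exactly the non-vanishing claim $F_c\not\equiv 0$, which is where the interplay between the decoder and the secret $\mathrm v$ lives. The danger is a branch on which the adversarial corruption, although chosen without knowledge of $\mathrm v$, conspires with the honest responses to make \eqref{eq:prop-cond} an identity in $\mathrm v$. I would rule this out by comparing the $\mathrm v$-degree and leading term of the two sides of \eqref{eq:prop-cond}: the right-hand side contributes a pure $\mathrm v^{\Delta_{j_0}}e$ term with $e\neq\bzero$, whereas the decoding error $\hat S-S$ on a fixed interpolation set cannot reproduce this term without forcing the honest coordinates in $P$ to be inconsistent with a genuine degree-$(K-1)$ evaluation of $h^{(\mathrm v)}$. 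Making this comparison fully rigorous—handling distinct values of $\Delta_j$ and checking all coordinates $(i,j)$ simultaneously—is the part needing the most care; the rest is bookkeeping with Lagrange interpolation and a standard probabilistic estimate.
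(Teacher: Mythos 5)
Your reduction is the same one the paper uses: a false accept at coordinate $(i_0,j_0)$ forces the identity $\hat S-S=\mathrm v^{\Delta_{j_0}}e$ with $e\neq 0$, the $T$-security of the verification shares keeps $\mathrm v$ uniform conditioned on the adversaries' view, and Schwartz--Zippel finishes. But the step you flag as ``the main obstacle'' --- showing $F(\mathrm v)\triangleq \hat S-S-\mathrm v^{\Delta_{j_0}}e\not\equiv 0$ --- is left unproven, and this is precisely the load-bearing step of the whole argument, so as written the proof has a genuine gap. Moreover, the route you sketch for closing it (treating the honest responses $h^{(\mathrm v)}(\alpha_i)$ as degree-$d^\star$ polynomials in $\mathrm v$ and comparing leading terms against $\mathrm v^{\Delta_{j_0}}e$) is a detour that makes the claim look harder than it is and would additionally force you to handle the joint randomness of $\mathrm v$ and the masks $\bZ_t^{(\mathrm v,j)}$ inside those polynomials.

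The missing observation is a cancellation: on any fixed interpolation set $P$, both $S$ and $\hat S$ are the \emph{same} fixed linear combination $\sum_{i\in P}c_i(\cdot)$ applied to, respectively, the true values $h^{(\mathrm v)}(\alpha_i)$ and the received values; hence the honest coordinates contribute identically to both and
\begin{align*}
\hat S-S=\sum_{i\in P\cap\,\mathrm{adv}}c_i\left(\hat h^{(\mathrm v)}(\alpha_i)-h^{(\mathrm v)}(\alpha_i)\right),
\end{align*}
a fixed linear combination of the adversaries' additive errors only (and likewise $e$ is a fixed combination of their errors on the $h$-task). Each such error is a function of the adversaries' joint view, which by the $T$-security argument is independent of $\mathrm v$; so conditioned on that view, $\hat S-S$ and $e$ are constants, $F(\mathrm v)=a-\mathrm v^{\Delta_{j_0}}e$ is a nonzero polynomial of degree exactly $\Delta_{j_0}$ whenever $e\neq 0$, and root counting gives probability at most $\Delta_{j_0}/(q-1)$ per coordinate. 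This is exactly how the paper argues (it writes the recovered values as $\sum_k c_{i,j,k}\,(\text{response}_k)$ with coefficients independent of the responses and of $\mathrm v$, so that the difference $\tilde h^{(\mathrm v)}-h^{(\mathrm v)}$ collapses to the adversarial error terms). With that cancellation in hand, your conditioning on the decoder's branch and the union bound over coordinates $(i,j)$ are fine bookkeeping, but the leading-term analysis of the honest responses is unnecessary.
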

\begin{proof}
Without loss of generality, let us assume that the first $T$ servers are adversarial and provide the following responses $$\left(\hat{h}(\alpha_1),\hat{h}^{(\mathrm{v})}(\alpha_1)\right),\ldots,\left(\hat{h}(\alpha_T),\hat{h}^{(\mathrm{v})}(\alpha_T)\right).$$  Let $K$ be the recovery threshold. Note that, we can write the required computations as a linear combination of correct responses, that is,
\begin{align}
        h\left(\beta_i^{(D,j)}\right) &= \sum_{k=1}^K c_{i,j,k} h(\alpha_k),\\
        h^{(\mathrm{v})}\left(\beta_i^{(D,j)}\right) &= \sum_{k=1}^K c_{i,j,k} h^{(\mathrm{v})}(\alpha_k).
\end{align}

We do note that the constants $c_{i,j,k}$ depend only on the evaluation points of participating nodes.
Specifically, they do not depend on the responses and values of $\mathrm{v}$. This comes from the fact that we can recover the polynomial $h$ by multiplying the inverse of an $K$ by $K$ Vandermonde matrix and the vector of $K$ responses. Let $\tilde{h}$ and $\tilde{h}^{(\mathrm{v})}$ be required computations affected by wrong responses from malicious nodes.
That is, 
\begin{align}
        &\tilde{h}\left(\beta_i^{(D,j)}\right)\notag\\
        &\hspace{4mm}= \sum_{k=1}^T c_{i,j,k} \hat{h}(\alpha_k) + \sum_{k=T+1}^K c_{i,j,k} h(\alpha_{k}),\\
        &\tilde{h}^{(\mathrm{v})}\left(\beta_i^{(D,j)}\right)\notag\\  &\hspace{4mm}= \sum_{k=1}^T c_{i,j,k} \hat{h}^{(\mathrm{v})}(\alpha_k) + \sum_{k=T+1}^K c_{i,j,k} h^{(\mathrm{v})}(\alpha_{k}).
\end{align}
Let $E$ be the event where the adversaries successfully persuade the Master Node to accept the wrong results.
In other words, 
\begin{align}
    E = \left(\bigcap_{i\in[\ell_j],j\in[L]} A_{i,j}\right) \cap \left(\bigcup_{i\in[\ell_j],j\in[L]} B_{i,j}\right),
\end{align}
\noindent where $A_{i,j}$ is the event when $\tilde{h}^{(\mathrm{v})} \left(\beta_i^{(D,j)}\right) = \mathrm{v}^{\Delta_j}\tilde{h}\left(\beta_i^{(D,j)}\right)$ and $B_{i,j}$ is the event when $h\left(\beta_i^{(D,j)}\right) \ne \tilde{h}\left(\beta_i^{(D,j)}\right)$. Then,
\begin{align}
    \mathbb{P}(E) &= \mathbb{P}\left(\left(\bigcap_{i\in[\ell_j],j\in[L]} A_{i,j}\right) \cap \left(\bigcup_{i\in[\ell_j],j\in[L]} B_{i,j}\right)\right)\notag \\
    &= \mathbb{P}\left(\bigcup_{i\in[\ell_j],j\in[L]}\left(\left(\bigcap_{i\in[\ell_j],j\in[L]} A_{i,j}\right)\cap B_{i,j}\right)\right)\notag\\
    &\le \sum_{i\in[\ell_j],j\in[L]} \mathbb{P}\left(\left(\bigcap_{i\in[\ell_j],j\in[L]} A_{i,j}\right)\cap B_{i,j}\right)\notag\\
    &\le \sum_{i\in[\ell_j],j\in[L]} \mathbb{P}\left(A_{i,j}\cap B_{i,j}\right)
\end{align}

Note that $A_{i,j} \cap B_{i,j}$ is equivalent to
\begin{align}\label{proof:verif_approach_1}
    0 = &\,\,\tilde{h}^{(\mathrm{v})} \left(\beta_i^{(D,j)}\right) - \mathrm{v}^{\Delta_j}\tilde{h}\left(\beta_i^{(D,j)}\right)\notag\\
    = &\,\,\tilde{h}^{(\mathrm{v})} \left(\beta_i^{(D,j)}\right) - h^{(\mathrm{v})}\left(\beta_i^{(D,j)}\right)\notag\\ &- \mathrm{v}^{\Delta_j}\left(\tilde{h}\left(\beta_i^{(D,j)}\right) - h\left(\beta_i^{(D,j)}\right)\right),
\end{align}
\noindent while $\tilde{h}\left(\beta_i^{(D,j)}\right) - h\left(\beta_i^{(D,j)}\right)$ is nonzero. The terms $\tilde{h}^{(\mathrm{v})} \left(\beta_i^{(D,j)}\right) - h^{(\mathrm{v})}\left(\beta_i^{(D,j)}\right)$ and $\tilde{h}\left(\beta_i^{(D,j)}\right) - h\left(\beta_i^{(D,j)}\right) \ne 0$ are fully controlled by adversary nodes (hence, these terms are deterministic and independent from $\mathrm{v}$). However, at most $T$ adversaries are not able to obtain any information of $\mathrm{v}$.  
Therefore, \eqref{proof:verif_approach_1} is equivalent to $G_{i,j}(\mathrm{v}) = 0$, where $G_{i,j}(\mathrm{v}) = a_{i,j} - \mathrm{v}^{\Delta_j} b_{i,j}$, $b_{i,j} \ne 0$. Hence, applying Schwartz-Zippel Lemma~\cite{SZ1,SZ2}, we have,
\begin{align}
    \mathbb{P}(E) &\le \sum_{i\in[\ell_j],j\in[L]} \mathbb{P}\left(A_{i,j}\cap B_{i,j}\right)\notag\\
    &= \sum_{i\in[\ell_j],j\in[L]}\mathbb{P}\left(G_{i,j}(\mathrm{v}) = 0\right)\notag\\
    &{\le} \sum_{i\in[\ell_j],j\in[L]} \frac{\Delta_j}{q-1}\leq O\left(\frac{1}{q}\right).
\end{align}
\end{proof}

\subsection{Verification Scheme 4B (for square matrices)}
\label{sec::verification-B}
In this verification scheme, we focus on operations over $n$ by $n$ square matrices. 
This approach is inspired by Freivalds' algorithm.
Specifically, if we want to verify whether $C = AB$, Frievalds' algorithm first generates a nonzero random vector $\textrm{v}\in\FF^n$ and checks whether $C\textrm{v} = AB\textrm{v}$. 
The key observation is that the latter verification requires only three matrix-vector multiplications with no matrix-matrix multiplications. 

Suppose that $C=AB$ is one of our required computations. 
Our objective is to outsource some extra computations to the Worker Nodes by adding one column to $B$, 
such that the modified computation contains both $C$ and $C\textrm{v}$. 
More precisely, instead of computing $AB$, we compute $A[B\,\,B\textrm{v}] = [AB\,\,AB\textrm{v}] = [C\,\,C\textrm{v}]$. 
However, this contradicts with our assumption that all inputs of the computations have the same dimension. 
Therefore, we add one extra row and one extra column on both $A$ and $B$, generate another nonzero random vector $\textrm{u}\in\FF^n$ 
and compute
{\small 
\begin{equation*}
\begin{bmatrix}
    A & \pmb{0}\\
    \textrm{u}^TA & 0
\end{bmatrix}\begin{bmatrix}
    B & B\textrm{v}\\
    \pmb{0}^T & 0
\end{bmatrix} = \begin{bmatrix}
    AB & AB\textrm{v}\\
    \textrm{u}^TAB & \textrm{u}^TAB\textrm{v}
\end{bmatrix} = \begin{bmatrix}
    C & C\textrm{v}\\
    \textrm{u}^TC & \textrm{u}^TC\textrm{v}
\end{bmatrix}.
\end{equation*}
}


By doing so, we also obtain more verification equations to check the correctness of $C$.

For the general computations on square matrices, we generalize the logic above as follows. The Master Node first modifies inputs depending on their position in the monomials of $\psi_j$. For instance, let $m_{j,i}$ be the monomials of $\psi_j$, $i\in[k_j]$. Then, we can rewrite it as
\begin{align}
&\psi_j(M_1,\ldots,M_{s_j}) = \sum_{i=1}^{k_j} m_{j,i}\left(M_{(j,i)}^{(1)},\ldots,M_{(j,i)}^{(s_{j_i})}\right),
\end{align}
\noindent where $s_{j_i}$ denotes the number of matrices involved in the monomial $m_{j,i}$. To perform such a modification, we write matrix powers as multiplications of a matrix by itself several times. For instance, we write $m(M_1,M_2) = M_1^2M_2^3$ as $M_1M_1M_2M_2M_2$. Let $M$ be a matrix in one of the monomial $m_{j,i}$. After it, the Master Node generates two nonzero random vectors $\textrm{u}_j,\textrm{v}_j\in\FF^n$ and performs modification on $M$ depending on its position in $m_{j,i}$.
\begin{enumerate}
    \item If $M$ is the first matrix in the monomial, then we modify $M$ into $M^{(\textrm{first})}=\begin{bmatrix}
        M & \pmb{0}\\
        \textrm{u}_j^TM & 0
    \end{bmatrix}$.\vspace{3mm}
    \item If $M$ is the last matrix in the monomial, then we modify $M$ into $M^{(\textrm{last})}=\begin{bmatrix}
        M & M\textrm{v}_j\\
        \pmb{0}^T & 0
    \end{bmatrix}$.\vspace{3mm}
    \item If $M$ is the first and the last in the monomial, then we modify $M$ into $M^{(\textrm{first,last})}=\begin{bmatrix}
        M & M\textrm{v}_j\\
        \textrm{u}_j^TM & \textrm{u}_j^T M \textrm{v}_j
    \end{bmatrix}$. This case only happens when $m_{j,i}(x) = x$. However, if the monomial $m_{j,i}(x) = x$, the Master Node does not need to include this monomial in the distributed computation scheme as it can be done easily by the Master Node. Hence, we ignore this case.\vspace{3mm}
    \item If $M$ is neither the first nor the last matrix in the monomial, then we modify $M$ into $M^{(\textrm{middle})}=\begin{bmatrix}
        M & \pmb{0}\\
        \pmb{0}^T & 0
    \end{bmatrix}$.
\end{enumerate}

We do note that we also need to slightly modify the monomials as the number of inputs might change. 
Let us look at the same example above when $m(M_1,M_2) = M_1M_1M_2M_2M_2$. After modification our goal is to compute $$M_1^{(\text{first})}M_1^{(\text{mid})}M_2^{(\text{mid})}M_2^{(\text{mid})}M_2^{(\text{last})}.$$

There are only two inputs in the initial monomial $m$.
However, in the modified monomial, we have four inputs $M_1^{(\text{first})},M_1^{(\text{mid})},M_2^{(\text{mid})},$ and $M_2^{(\text{last})}$. 
We denote the modified polynomial as $\hat{m}$. 
In Propositions~\ref{thm:mono_with_mod_inp} and~\ref{thm:polywithmodif},
we show that such modifications on inputs and monomials result in a matrix that contains the required computations. 
After that, we prove verification guarantees in Theorem~\ref{thm::verificationB}.

\begin{proposition}\label{thm:mono_with_mod_inp}
    Let $m$ be a monomial in which inputs are matrices $M_1,\ldots,M_n$. Let us write $m$ as multiplications of matrices (of power one). Let $\hat{M}_1,\ldots,\hat{M}_{\hat{n}}$ be the modified matrices with nonzero random vectors $\mathrm{u},\mathrm{v}\in\FF^n$ and let $\hat{m}$ be the monomial with the same multiplication structure as $m$, but with $\hat{n}$ inputs. Then,
    \begin{align}
        &\hat{m}(\hat{M}_1,\ldots,\hat{M}_{\hat{n}})\notag\\ &= \begin{bmatrix}
            m(M_1,\ldots,M_n) & m(M_1,\ldots,M_n)\mathrm{v}\\
            \mathrm{u}^Tm(M_1,\ldots,M_n) & \mathrm{u}^Tm(M_1,\ldots,M_n)\mathrm{v}
        \end{bmatrix}\,.
    \end{align}
\end{proposition}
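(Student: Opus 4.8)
The plan is to prove the identity by a direct block-matrix computation, organizing the product so that the first, the middle, and the last factors are handled separately. First I would write the monomial as an ordered product of power-one factors, $m(M_1,\ldots,M_n) = N_1 N_2 \cdots N_k$, where each $N_t$ is one of the $M_i$; the modified monomial then reads $\hat m(\hat M_1,\ldots,\hat M_{\hat n}) = \hat N_1 \hat N_2 \cdots \hat N_k$ with $\hat n = k$. Since the single-factor monomial $m(x)=x$ is exactly the ``first and last'' case that the preceding discussion removes from the scheme, I may assume $k \ge 2$. Under the modification rules, the first factor is $\hat N_1 = \begin{bmatrix} N_1 & \pmb{0} \\ \mathrm{u}^T N_1 & 0 \end{bmatrix}$, the last is $\hat N_k = \begin{bmatrix} N_k & N_k \mathrm{v} \\ \pmb{0}^T & 0 \end{bmatrix}$, and each middle factor is $\hat N_t = \begin{bmatrix} N_t & \pmb{0} \\ \pmb{0}^T & 0 \end{bmatrix}$ for $1 < t < k$.

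The key lemma I would establish first is that the middle factors multiply cleanly. Block multiplication gives $\begin{bmatrix} A & \pmb{0} \\ \pmb{0}^T & 0 \end{bmatrix}\begin{bmatrix} B & \pmb{0} \\ \pmb{0}^T & 0 \end{bmatrix} = \begin{bmatrix} AB & \pmb{0} \\ \pmb{0}^T & 0 \end{bmatrix}$, so a trivial induction on the number of middle factors yields $\hat N_2 \cdots \hat N_{k-1} = \begin{bmatrix} P & \pmb{0} \\ \pmb{0}^T & 0 \end{bmatrix}$, where $P \triangleq N_2 \cdots N_{k-1}$, with $P$ the identity (empty product) when $k=2$.

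Next I would attach the endpoints. Multiplying on the left by $\hat N_1$ gives $\hat N_1 \begin{bmatrix} P & \pmb{0} \\ \pmb{0}^T & 0 \end{bmatrix} = \begin{bmatrix} N_1 P & \pmb{0} \\ \mathrm{u}^T N_1 P & 0 \end{bmatrix}$, and then multiplying on the right by $\hat N_k$ produces $\begin{bmatrix} N_1 P N_k & N_1 P N_k \mathrm{v} \\ \mathrm{u}^T N_1 P N_k & \mathrm{u}^T N_1 P N_k \mathrm{v} \end{bmatrix}$. Since $N_1 P N_k = N_1 N_2 \cdots N_k = m(M_1,\ldots,M_n)$, this is exactly the claimed matrix, which closes the proof.

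I do not anticipate a genuine obstacle: the statement reduces to bookkeeping in block form, and the only points needing care are the boundary cases (checking the degeneration at $k=2$, and recalling that $k=1$ is the excluded monomial $m(x)=x$). One subtlety worth stating explicitly is conformability of the blocks: in each modified matrix the off-diagonal zero block is $n\times 1$ or $1\times n$ and the bottom-right entry is a scalar, so every block product written above is well defined.
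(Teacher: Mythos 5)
Your proposal is correct and follows essentially the same route as the paper: both split the monomial into its first factor, the product of the middle factors, and the last factor, and then carry out the block-matrix multiplication. The only difference is that you make explicit (via induction) the step that the product of the middle-modified factors is the block-diagonal extension of the middle product, which the paper asserts implicitly.
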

\begin{proof}
    Let $\cal M$ be the set of matrices that are neither first nor last in the monomial $m$. Without loss of generality, let $M_1$ and $M_n$ be the first and last matrices, respectively, in the monomial $m$. 
    Then we write $m(M_1,\ldots,M_n) = M_1 m_{\text{middle}}({\cal M}) M_n$, where $m_{\text{middle}}$ is the monomial without the first and last matrices. Note that, for any matrix $M\in {\cal M}$, $M$ is modified into $M^{(\textrm{middle})}$. 
    Let $\hat{\cal M}$ be the set of such matrices and let $\hat{m}_{\text{middle}}$ be the monomial with the same multiplication structure as $m_{\text{middle}}$ but with the modified inputs.
    In other words, we have $\hat{m}(\hat{M}_1,\ldots,\hat{M}_{\hat{n}}) = M_1^{(\text{first})}\hat{m}_{\text{middle}}(\hat{\cal M}) M_n^{(\text{last})}$ Then
    \begin{align*}
        \hat{m}&(\hat{M}_1,\ldots,\hat{M}_{\hat{n}})\\ &= \begin{bmatrix}
            M_1 & \pmb{0}\\
            \mathrm{u}^TM_1 & 0
        \end{bmatrix}\begin{bmatrix}
         m_{\text{middle}}({\cal M}) & \pmb{0}\\
         \pmb{0}^T & 0
         \end{bmatrix}\begin{bmatrix}
             M_n & M_n\mathrm{v}\\
             \pmb{0}^T & 0
         \end{bmatrix}\\
         &= \begin{bmatrix}
             M_1m_{\text{middle}}({\cal M}) & \pmb{0}\\
             \mathrm{u}^TM_1m_{\text{middle}}({\cal M}) & 0
         \end{bmatrix}\begin{bmatrix}
             M_n & M_n\mathrm{v}\\
             \pmb{0}^T & 0
         \end{bmatrix}\\
         &= \begin{bmatrix}
            m(M_1,\ldots,M_n) & m(M_1,\ldots,M_n)\mathrm{v}\\
            \mathrm{u}^Tm(M_1,\ldots,M_n) & \mathrm{u}^Tm(M_1,\ldots,M_n)\mathrm{v}
        \end{bmatrix}\,.
    \end{align*}
\end{proof}

\begin{proposition}
\label{thm:polywithmodif}
    Consider a polynomial $\psi$ with $s$ matrices $M_1,\ldots,M_s$ as inputs. 
    Let $m_1,\ldots, m_k$ be the monomials of $\psi$ and suppose that the monomial $m_i$ has $s_i$ inputs for $i\in[k]$. 
    That is,
    \begin{align}
        \psi(M_1,\ldots,M_s)& = m_1\left(M_{(1)}^{(1)},\ldots, M_{(1)}^{(s_1)}\right) \notag\\
        &\hspace{2mm}+ \cdots + m_k\left(M_{(k)}^{(1)},\ldots, M_{(k)}^{(s_k)}\right).
    \end{align}
    Let $\hat{M}_1,\ldots,\hat{M}_{\hat{s}}$ be the modified matrices with nonzero random vectors $\mathrm{u},\mathrm{v}\in\FF^n$ as proposed and let $\hat\psi$ be the modified polynomial, with $\hat{s}$ inputs, which consists of modified monomials $\hat{m}_i$'s with the same multiplication structure as $m_i$'s, but with $\hat{s}_i$ inputs. Then,
    \begin{align}
        &\hat{\psi}(\hat{M}_1,\ldots,\hat{M}_{\hat{s}})\notag\\ &= \begin{bmatrix}
            \psi(M_1,\ldots,M_s) & \psi(M_1,\ldots,M_s)\mathrm{v}\\
            \mathrm{u}^T\psi(M_1,\ldots,M_s)  & \mathrm{u}^T\psi(M_1,\ldots,M_s) \mathrm{v}
        \end{bmatrix}.
    \end{align}
    \begin{proof}
        Applying Proposition~\ref{thm:mono_with_mod_inp}, we have
        {\small
        \begin{align*}
            &\hat{\psi}(\hat{M}_1,\ldots,\hat{M}_{\hat{s}}) \\&=  \sum_{i=1}^k \hat{m}_i \left(\hat{M}_{(i)}^{(1)},\ldots, \hat{M}_{(i)}^{(\hat{s}_i)}\right)\\
            &= \sum_{i=1}^k\begin{bmatrix} m_i(M_{(i)}^{(1)},\ldots,M_{(i)}^{(s_i)}) & m_i(M_{(i)}^{(1)},\ldots,M_{(i)}^{(s_i)})\mathrm{v}\\
                \mathrm{u}^Tm_i(M_{(i)}^{(1)},\ldots,M_{(i)}^{(s_i)}) & \mathrm{u}^Tm_k(M_{(i)}^{(1)},\ldots,M_{(i)}^{(s_i)})\mathrm{v}
            \end{bmatrix}
            \\&=\begin{bmatrix}
            \psi(M_1,\ldots,M_s) & \psi(M_1,\ldots,M_s)\mathrm{v}\\
            \mathrm{u}^T\psi(M_1,\ldots,M_s)  & \mathrm{u}^T\psi(M_1,\ldots,M_s) \mathrm{v}
        \end{bmatrix}.
        \end{align*}
        }
    \end{proof}
\end{proposition}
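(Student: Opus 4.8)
The plan is to reduce Proposition~\ref{thm:polywithmodif} to the monomial case already settled in Proposition~\ref{thm:mono_with_mod_inp}, using the additive structure of $\psi$ together with the fact that the target output is a fixed $2\times 2$ block matrix. First I would write $\hat\psi$ as the sum of its modified monomials. By hypothesis $\hat\psi$ has the same additive and multiplicative structure as $\psi$; only the number of inputs grows, because matrix powers are rewritten as explicit repeated products and the first/middle/last modification is then applied position-wise. Hence
\begin{align*}
\hat\psi(\hat M_1,\ldots,\hat M_{\hat s}) = \sum_{i=1}^{k} \hat m_i\left(\hat M_{(i)}^{(1)},\ldots,\hat M_{(i)}^{(\hat s_i)}\right).
\end{align*}

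Next I would invoke Proposition~\ref{thm:mono_with_mod_inp} for each $i\in[k]$ separately. Writing $m_i$ as shorthand for $m_i\left(M_{(i)}^{(1)},\ldots,M_{(i)}^{(s_i)}\right)$, the proposition gives
\begin{align*}
\hat m_i\left(\hat M_{(i)}^{(1)},\ldots,\hat M_{(i)}^{(\hat s_i)}\right) = \begin{bmatrix} m_i & m_i\mathrm{v}\\ \mathrm{u}^T m_i & \mathrm{u}^T m_i \mathrm{v}\end{bmatrix}.
\end{align*}
The crucial observation is that every monomial is modified with the \emph{same} pair of random vectors $\mathrm{u},\mathrm{v}\in\FF^n$, so each summand is an $(n+1)\times(n+1)$ matrix carrying the identical block partition into an $n\times n$ block, an $n\times 1$ block, a $1\times n$ block and a scalar.

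Finally I would add the $k$ block matrices. Since matrix addition respects any fixed block partition, the sum is the block matrix whose four entries are the corresponding blockwise sums; pulling $\mathrm{u}^T$ and $\mathrm{v}$ outside the summation by linearity then gives $\sum_i m_i = \psi$, $(\sum_i m_i)\mathrm{v} = \psi\mathrm{v}$, $\mathrm{u}^T(\sum_i m_i) = \mathrm{u}^T\psi$, and $\mathrm{u}^T(\sum_i m_i)\mathrm{v} = \mathrm{u}^T\psi\mathrm{v}$, which is exactly the claimed form. I expect the only delicate point to be bookkeeping rather than computation: I must verify that the first/middle/last assignment used inside each $\hat m_i$ genuinely matches the hypotheses of Proposition~\ref{thm:mono_with_mod_inp}, in particular that every retained monomial has a well-defined first and last factor. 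This is guaranteed by the paper's convention of discarding degree-one monomials $m(x)=x$ and expanding higher powers into explicit products, so that the lone pathological case (first $=$ last) never arises. Confirming this consistency across all monomials is the main thing to get right.
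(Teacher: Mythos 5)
Your proposal is correct and follows exactly the same route as the paper's proof: expand $\hat\psi$ as the sum of its modified monomials, apply Proposition~\ref{thm:mono_with_mod_inp} to each one, and sum the resulting block matrices using blockwise addition and linearity in $\mathrm{u}^T$ and $\mathrm{v}$. Your added remarks on the shared random vectors across monomials and the exclusion of the degenerate first-equals-last case are sound bookkeeping that the paper leaves implicit.
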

Proposition~\ref{thm:polywithmodif} shows that, if the Master Node performs distributed computations using the modified inputs (with the modified polynomials), the resulting matrix contains the required computations and some other elements that can be used for verification purposes. Suppose that one of the modified computations results in $\begin{bmatrix}
    M & M\mathrm{v}\\
    \mathrm{u}^TM & \mathrm{u}^TM\mathrm{v}
\end{bmatrix}$. Then $M$ is one of the required computations and the Master Node performs the following checks:
\begin{enumerate}
    \item compute $M\mathrm{v}$ and check if $M\mathrm{v}$ is equal to the first $n$ components in the last column,
    \item compute $\mathrm{u}^T M$ and check if $\mathrm{u}^TM$ is equal to the first $n$ components in the last row, and
    \item compute $\mathrm{u}^T M\mathrm{v}$ and check if $\mathrm{u}^TM\mathrm{v}$ is equal to the bottom right component.
\end{enumerate}
The theorem below shows that this verification scheme can detect the presence of incorrect computations with high probability.
\begin{theorem}\label{thm::verificationB}
    Scheme 4B can detect the presence of incorrect computations in the presence of up to $T$ adversarial nodes with probability $1-O\left(\frac{1}{q}\right)$.
\end{theorem}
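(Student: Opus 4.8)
The plan is to mirror the analysis of Scheme~4A, replacing the Schwartz--Zippel step on a single scalar by a Freivalds-type argument on the random vectors $\mathrm{u}_j,\mathrm{v}_j$. First I would fix one required computation, say $M=\psi_j(\bX_i^{(j)})$, and let $N$ denote the honest block matrix of Proposition~\ref{thm:polywithmodif}, namely $N=\begin{bmatrix} M & M\mathrm{v}_j\\ \mathrm{u}_j^T M & \mathrm{u}_j^T M\mathrm{v}_j\end{bmatrix}$. Because recovery of $h$ (now applied to the \emph{modified} inputs) is linear in the workers' responses with coefficients that depend only on the public evaluation points -- exactly as observed in the analysis of Scheme~4A -- the block matrix $\tilde N$ that the Master Node actually reconstructs equals $N+E$, where the additive error $E$ is a fixed linear combination of the adversaries' errors. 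Writing $E=\begin{bmatrix} E_{11} & E_{12}\\ E_{21} & E_{22}\end{bmatrix}$ in the same $2\times 2$ block form as $N$, the event that the recovered computation is incorrect is precisely $E_{11}\neq 0$, where $E_{11}$ is the top-left $n\times n$ block.

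Next I would translate the three verification checks into conditions on $E$. A direct expansion gives that the column check passes iff $E_{11}\mathrm{v}_j=E_{12}$, the row check passes iff $\mathrm{u}_j^T E_{11}=E_{21}$, and the corner check passes iff $\mathrm{u}_j^T E_{11}\mathrm{v}_j=E_{22}$. Hence an \emph{undetected} wrong computation forces $E_{11}\neq 0$ together with, in particular, $E_{11}\mathrm{v}_j=E_{12}$, so it suffices to bound the probability of this single event.

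The crux is the privacy observation used throughout the paper: at most $T$ colluding/adversarial workers obtain zero information about $\mathrm{u}_j,\mathrm{v}_j$, since these vectors enter only through the modified dataset, any $T$ shares of which are independent of the data by the $T$-security established in Theorem~\ref{theorem:1}. Consequently the error $E$, being a function of the adversaries' responses and the public evaluation points, is statistically independent of $\mathrm{u}_j,\mathrm{v}_j$. I can therefore treat $E_{11}\neq 0$ and $E_{12}$ as fixed and bound $\Pr_{\mathrm{v}_j}[E_{11}\mathrm{v}_j=E_{12}]$: since $\rank(E_{11})\geq 1$, the solution set is empty or a coset of $\ker E_{11}$ of size at most $q^{\,n-1}$, so the probability over a uniform $\mathrm{v}_j\in\FF^n\setminus\{\pmb{0}\}$ is at most $q^{\,n-1}/(q^{\,n}-1)=O(1/q)$.

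Finally, a union bound over all $\ell$ required computations yields total failure probability $O(\ell/q)=O(1/q)$, as claimed. I expect the main obstacle to be the independence step rather than any property of Freivalds' algorithm itself: one must argue carefully that the adversaries' additive errors genuinely cannot depend on $\mathrm{u}_j,\mathrm{v}_j$, and that conditioning on the $T$ observed shares leaves $\mathrm{v}_j$ uniform enough over the nonzero vectors for the rank bound to apply. This is precisely where the $T$-security of the sharing polynomials does the real work.
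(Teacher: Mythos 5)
Your proposal is correct and follows essentially the same route as the paper: both decompose the reconstructed block matrix into the honest result plus an additive error that, by linearity of Reed--Solomon recovery and the $T$-security of the shares, is independent of the verification keys $\mathrm{u}_j,\mathrm{v}_j$, and then bound the probability that a nonzero error block satisfies the column (or row, or corner) check. The only cosmetic difference is that you bound $\Pr[E_{11}\mathrm{v}_j=E_{12}]$ by counting the coset of $\ker E_{11}$ directly, whereas the paper invokes the Schwartz--Zippel lemma; both yield the same $O(1/q)$ bound.
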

\begin{proof}
    Suppose that the recovered computations (with modified inputs) are of the form $\begin{bmatrix}
        M_{i,j} & x_{i,j}\\
        y_{i,j}^T & z_{i,j}
    \end{bmatrix}$, for all $i\in[\ell_j], j\in[L]$, where $M_{i,j}$ is the original computation $\psi_j\left(\pmb{X}_i^{(j)}\right)$. Let $\begin{bmatrix}
        \tilde{M}_{i,j} & \tilde{x}_{i,j}\\
        \tilde{y}_{i,j}^T & \tilde{z}_{i,j}
    \end{bmatrix}$ be the matrix obtained from Worker responses in the presence of at most $T$ adversarial workers. Let $E$ be the event when the adversaries successfully persuade the Master Node to accept the wrong results. 
    That is,
    \begin{align}
        E = \bigcup_{i\in[\ell_j], j\in[L]} (A_{i,j} \cap B_{i,j} \cap C_{i,j} \cap D_{i,j}),
    \end{align}
    where $A_{i,j}$ is the event when $M_{i,j} \ne \tilde{M}_{i,j}$, $B_{i,j}$ is the event when $\tilde{x}_{i,j} = \tilde{M}_{i,j}\mathrm{v}_j$, $C_{i,j}$ is the event when $\tilde{y}_{i,j}^T = \mathrm{u}_j^T \tilde{M}_{i,j}$ and $D_{i,j}$ is the event when $\tilde{z}_{i,j} = \mathrm{u}_j^T \tilde{M}_{i,j}\mathrm{v}$. 
    Then
    \begin{align}
        \mathbb{P}(E)
        &= \mathbb{P}\left(\bigcup_{i\in[\ell_j], j\in[L]} (A_{i,j} \cap B_{i,j} \cap C_{i,j} \cap D_{i,j})\right)\notag\\
        &= \sum_{i\in[\ell_j],j\in[L]}\mathbb{P}\left(A_{i,j} \cap B_{i,j} \cap C_{i,j} \cap D_{i,j} \right)\notag\\
        &\le  \sum_{i\in[\ell_j],j\in[L]}\min\Bigg\{\mathbb{P}(A_{i,j}\cap B_{i,j}),\mathbb{P}(A_{i,j}\cap C_{i,j}),\notag\\
        & \hspace{3cm} \mathbb{P}(A_{i,j}\cap D_{i,j})\Bigg\}\,.
    \end{align}
    
The condition $A_{i,j}\cap B_{i,j}$ is equivalent to
\begin{align}
    0 &= \tilde{x}_{i,j} - \tilde{M}_{i,j}\mathrm{v}_j\notag \\
    &=  \tilde{x}_{i,j} - x_{i,j} - (\tilde{M}_{i,j} - M_{i,j})\mathrm{v}_j,
\end{align}
where $\tilde{M}_{i,j}-M_{i,j}$ is nonzero. The terms $\tilde{x}_{i,j} - x_{i,j}$ and $\tilde{M}_{i,j} - M_{i,j}\ne 0$ are controlled by adversary nodes (hence, these terms are deterministic and independent to $\mathrm{v}$). Therefore, the event $A_{i,j}\cap B_{i,j}$ is equivalent to $G^{(A,B)}_{i,j}(\mathrm{v}_j) = 0$, where $G^{(A,B)}_{i,j}(\mathrm{v}) = a_{i,j} - B_{i,j}\mathrm{v}$, $B_{i,j}\ne 0$. Hence, by Schwartz-Zippel Lemma~\cite{SZ1, SZ2}, $\mathbb{P}(A_{i,j}\cap B_{i,j}) = O\left(\frac{1}{q}\right)$. By the same technique we can easily get that $\mathbb{P}(A_{i,j}\cap C_{i,j}) =\mathbb{P}(A_{i,j}\cap D_{i,j}) =O\left(\frac{1}{q}\right)$.
\end{proof}

\section{Numerical Results}\label{sec::numerical}

In this section, we discuss the performance of our distributed computing schemes. First, we look at the performance of the schemes without verification (Scheme 1, 2, 3, and 4) in solving the problem in Example~\ref{example::scheme3}, which is to compute $A^2, B^2$ and $AB+BC^2$ in the presence of one colluding worker and one adversarial worker, without verification. Then, we compare the performance of our verification schemes which are built on top of Scheme 4. The Straggler Resistance (SR), Upload Cost (UC), and Download Cost (DC) for each scheme can be found in Table~\ref{tab:scheme_no_verif_1}, and the computation costs can be found in Table~\ref{tab:scheme_no_verif_2}.
\begin{table}[]
    \captionsetup{font=small}
    \centering
    \fontsize{8}{11}\selectfont
    \begin{tabularx}{0.5\textwidth}{ @{}|Y|Y|Y|Y|@{} }
    \hline  
       Method  & SR & UC & DC \\
       \hline
       Scheme 1  & $d-12$ & $4dn^2$ & $12n^2$ \\
      \hline
       Scheme 2  & $\left\lfloor\dfrac{d-13}{2}\right\rfloor$ & \makecell{$(d_1+3d_2)n^2$\\$d_1+d_2 = d$} & $13n^2$ \\
       \hline
       Scheme 3  & $d-7$ & $4dn^2$ & $13n^2$ \\
       \hline
       Scheme 4  & $d-8$ & $4dn^2$ &  $8n^2$ \\
       \hline
       Scheme 4A & $d-8$ & $9dn^2$ & $16n^2$\\
       \hline
       Scheme 4B & $d-8$ & $7d(n+1)^2$ & $8(n+1)^2$\\
       \hline
    \end{tabularx}
    \caption{The Straggler Resistance (SR), Upload Cost (UC), and Download Cost (DC) to obtain $A^{2}, B^{2}, AB+BC^2$ as in Example~\ref{example::scheme3}, with and without verification. We express UC and DC in terms of the number of field elements $\FF$.}
    \label{tab:scheme_no_verif_1}
\end{table}

\begin{table}[]
    \captionsetup{font=small}
    \centering
    \fontsize{8}{11}\selectfont
    \begin{tabularx}{0.5\textwidth}{ @{}|Y|Y|Y|Y|Y|@{} }
       \hline
       \multirow{2}{*}{Method}  & \multicolumn{2}{c|}{Computation on MN} & \multicolumn{2}{c|}{Computation on WN} \\
       \cline{2-5}
       & $\cdot$ & $\times$ & $\cdot$ & $\times$\\
       \hline
       Scheme 1  & $16$ & $0$ & $0$  & $4$ ($n$ by $n$)\\
       \hline
       Scheme 2  & \makecell{$G_1:3$ \\ $G_2:6$} & $0$ & $0$ & \makecell{$G_1:1$\\ $G_2:3$\\ ($n$ by $n$)} \\
       \hline
       Scheme 3  & $9$ & $0$ & $0$ & $4$ ($n$ by $n$)\\
       \hline
       Scheme 4  & $9$ & $0$ &  $2$ & $4$ ($n$ by $n$)\\
       \hline
       Scheme 4A & $20$ & $0$ & $4$ & $8$ ($n$ by $n$)\\
       \hline
       Scheme 4B & $16$ & $0$ & $2$ & \makecell{$4$ ($n+1$\\ by $n+1$)}\\
       \hline
    \end{tabularx}
    \caption{The computation costs to obtain $A^{2},B^{2},AB+BC^2$ as in Example~\ref{example::scheme3}.}
    \label{tab:scheme_no_verif_2}
\end{table}

We can see that, in terms of Straggler Resistance, Scheme~2 performs the worst. Scheme~4 performs better than Scheme~1 but slightly worse than Scheme~3. All schemes have comparable Upload Costs but the Download Cost of Scheme~4 is lower than other schemes. Furthermore, Master Nodes don't perform any matrix-matrix multiplication $\times$ in all schemes but only scalar multiplications, which are easy to perform. However, Scheme~4 slightly increases the workload of the Worker Node by $2$ extra scalar multiplications.

Now, let us compare Scheme 4 with Scheme 4A and 4B. Firstly, they all have the same Straggler Resistance as the verification schemes do not require extra responding nodes to perform. However, our verification schemes increase communication costs. The Upload Cost increases from $4dn^2$ to $9dn^2$ for Scheme 4A and $7d(n+1)^2$ for Scheme 4B. On the other hand, the Download Cost increases from $8n^2$ to $16n^2$ for Scheme 4A and $8(n+1)^2$ for Scheme 4B. For big $n$, Scheme 4B has lower communication costs. 

The computation on Master Node increases from $9$ scalar multiplications to $20$ for Scheme 4A  and $16$ for Scheme 4B. On Worker Nodes, Scheme 4A doubles the computation. However, with Scheme 4B, the Worker Nodes only need to do the same number of multiplications of $n+1$ by $n+1$ matrices. For big $n$, Scheme $4B$ has lower computation costs.
\section{Conclusion}\label{sec::conclusion}

We considered the problem of efficiently evaluating arbitrary multivariate polynomials over several massive datasets in a distributed computing system. We proposed a new scheme based on the Lagrange Coded Computing framework and compared its efficiency against several naive schemes that provide a solution to a problem of our kind. While our proposed scheme has slightly worse straggler resistance in comparison to some of the naive schemes to the problem of our kind, we observe that it provides a significantly lower download cost in comparison to all competing schemes. On top of it, we propose two verification schemes to detect the existence of wrong results without increasing the number of required responses. One scheme works for a general distributed computing set-up, while another one only works for square matrices. However, the latter has lower communication and computation costs than the former. Generalizing proposed approaches for machine-learning functions and real-number cases are interesting open problems.

{\section*{Acknowledgements.}
This research/project is supported by the National Research Foundation, Singapore under its Strategic Capability Research Centres Funding Initiative, Singapore Ministry of Education Academic Research Fund Tier 2 Grants MOE2019-T2-2-083 and MOE-T2EP20121-0007. Any opinions, findings and conclusions or recommendations expressed in this material are those of the author(s) and do not reflect the views of National Research Foundation, Singapore.}

\balance
\bibliographystyle{IEEEtran}
\bibliography{OneRoundCodedComputingofMultipleFunctions}

\end{document}